\def\dOi{12(4:10)2016}
\subjclass{F.3.2 [Logics and Meanings of Programs]: Semantics of
  Programming Languages---process models; F.4.1 [Mathematical Logic
  and Formal Languages]: Mathematical Logic---modal logic}
\theoremstyle{plain}
\newtheorem{lemma}[thm]{Lemma}
\newtheorem{theorem}[thm]{Theorem}
\newtheorem{corollary}[thm]{Corollary}
\theoremstyle{definition}
\newtheorem{definition}[thm]{Definition}
\newtheorem{example}[thm]{Example}
\newtheorem{remark}[thm]{Remark}
\newcommand{\op}{\mathit{op}}
\newcommand{\coalg}[1]{\mathsf{Coalg}(#1)}
\newcommand{\alg}[1]{\mathsf{Alg}(#1)}
\newcommand{\Set}{\mathsf{Set}}
\newcommand{\PCM}{\mathsf{PCM}}
\newcommand{\M}{\mathcal{M}}
\newcommand{\C}{\mathcal{C}}
\newcommand{\D}{\mathcal{D}}
\newcommand{\sem}{\mathbb{S}}
\newcommand{\tr}{{s^\flat}}
\newcommand{\trr}[1]{{s^\flat_{#1}}}
\newcommand{\Id}{\mathsf{Id}}
\newcommand{\id}{\mathsf{id}}
\newcommand{\pow}{\mathcal{P}}
\newcommand{\powf}{{\pow_\omega}}
\newcommand{\true}{\mathsf{tt}}
\newcommand{\To}{\Rightarrow}
\newcommand{\coalgfun}[1]{#1 \circ -}
\newcommand{\algfun}[1]{- \circ #1}
\begin{document}

\title[Coalgebraic trace semantics via forgetful logics]{Coalgebraic trace semantics via forgetful logics}

\author[B.~Klin]{Bartek Klin\rsuper a}	
\address{{\lsuper a}University of Warsaw}	
\email{klin@mimuw.edu.pl}  
\thanks{{\lsuper a}Supported by the Polish National Science Centre (NCN) grant 2012/07/E/ST6/03026}	

\author[J.~Rot]{Jurriaan Rot\rsuper b}	
\address{{\lsuper b}Radboud University, Nijmegen}	
\email{jrot@cs.ru.nl}  
\thanks{{\lsuper b}The research leading to these results has received funding from the
European Research Council under the European Union's Seventh Framework
Programme (FP7/2007-2013) / ERC grant agreement nr.~320571.}

\keywords{coalgebra, trace semantics}


\begin{abstract}
We use modal logic as a framework for coalgebraic trace semantics, and show the 
flexibility of the approach with concrete examples such as the language semantics
of weighted, alternating and tree automata, and the trace semantics of
generative probabilistic systems.
We provide a sufficient condition under which a logical semantics coincides with 
the trace semantics obtained via a given determinization construction.
Finally, we consider a condition that guarantees the
existence of a canonical determinization procedure that
is correct with respect to a given logical semantics.
That procedure is closely related to  
Brzozowski's minimization algorithm.
\end{abstract}

\maketitle

\section{Introduction}

The theory of coalgebras~\cite{Rutten00,Jacobs:coalg} is a framework of choice to model and study state-based systems at a high level of generality. Coalgebraic methods have been rather successful in modeling branching time behaviour of various kinds of transition systems, with a general notion of bisimulation and final semantics as the main contributions. 
But in a wide range of settings, from automata theory to verification and model checking, one is interested
in the linear time behaviour of systems, such as trace semantics of transition systems or language semantics of automata. 
Indeed, coalgebraic modeling of linear time behaviour has also attracted significant attention. 

However, the emerging picture is considerably more complex: several approaches have been developed whose scopes and connections are not yet fully understood. Here we suggest a new approach which, contrary to previously considered coalgebraic ideas,
crucially and explicitly defines trace semantics not by coinduction but by induction, based on techniques
from modal logic. This provides a transparent framework that covers new examples such as weighted tree automata, allows us to naturally
relate trace semantics to determinization constructions, and, in certain well-behaved cases,
provides canonical determinization and minimization procedures.

To study trace semantics coalgebraically, one usually considers systems whose behaviour type is a composite functor of the form $TB$ or $BT$, where $T$ represents a branching aspect of behaviour that trace semantics is supposed to ``resolve'', and $B$ represents the transition aspect that should be recorded in system traces. Typically it is assumed that $T$ is a monad, and its multiplication structure is used to resolve branching. For example, in~\cite{powerturi,HJS07}, a distributive law of $B$ over $T$ is used to lift $B$ to the Kleisli category of $T$, and trace semantics is obtained as final semantics for the lifted functor. Additional assumptions on $T$ are needed for this, so this approach does not work for coalgebras such as weighted automata. 
On the other hand, in~\cite{JSS14,SBBR13} a distributive law of $T$ over $B$ is used to lift $B$ to the Eilenberg-Moore category of $T$, with trace semantics again obtained as final semantics for the lifted functor. This can be seen as a coalgebraic generalization of the powerset determinization procedure for non-deterministic automata.
While it applies to many examples, that approach does not work for systems that do not determinize, such as tree automata.
A detailed comparison of these two approaches is in~\cite{JSS14}. 

In this paper, we study trace semantics in terms of modal logic. The basic idea is very simple: we view traces as formulas in suitable modal logics, and trace semantics of a state arises from all formulas that hold for it. 
A coalgebraic approach to modal logic based on dual adjunctions is by now well developed~\cite{pavlovic,klin07,jacobssokolova09,kupkepattinson11}, and we apply it to speak of traces generally. Obviously not every logic counts as a trace logic: assuming a behaviour type of the form $BT$ or $TB$, we construct logics from arbitrary (but usually expressive) logics for $B$ and special logics for $T$ whose purpose is to resolve branching. We call such logics {\em forgetful}.

Our approach differs from previous studies in a few ways:
\begin{itemize}
\item Trace semantics is obtained not as final semantics of coalgebras, but by initial semantics of algebras.
Fundamentally, we view trace semantics as an inductive concept and not a coinductive one akin to bisimulation, 
although in some well-behaved cases the inductive and coinductive views coincide.
\item We do not assume that $T$ is a monad, unless we want to relate our logical approach to ones that do, in particular to determinization constructions. 
\item Instead of using monad multiplication $\mu\colon TT\To T$ to resolve branching, we use a natural transformation $\alpha\colon TG\To G$, where $G$ is a contravariant functor that provides the basic infrastructure of logics. In case of nondeterministic systems, $T$ is the covariant powerset functor and $G$ the contravariant powerset, so $TT$ and $TG$ act the same on objects, but they carry significantly different intuitions.
\item Thanks to the flexibility of modal logics, we are able to cover examples such as the language semantics of 
weighted tree automata, that does not quite fit into previously studied approaches, or alternating automata.
\end{itemize}

\begin{example}
Consider weighted automata over a semiring $\sem$. Every state of such an automaton, given an input letter,
has a linear combination of successor states, with coefficients in $\sem$. The
semantics of a weighted automaton with state space $X$ is a function $\tr \colon X \rightarrow \sem^{A^*}$,
mapping every state to a weighted language. In our framework, the 
actual map $\tr$ is computed as the transpose of a certain map $s \colon A^* \rightarrow \sem^X$,
which arises by induction on words. This computation is determined by a certain (rather simplistic) {\em forgetful logic}; the
formulas of this logic are just the words in $A^*$. 
In this example, it is crucial to distinguish between the branching structure of automata, modeled by linear combinations of states, and the functor $\sem^{-}$ that allows arbitrary assignments of weights to words and states. This distinction is an instance of a general distinction between a covariant functor (or a monad) $T$ and a contravariant functor $G$, 
which will be a recurring theme in this paper.
\end{example}

Applying the term {\em logic} to our framework is admittedly a bit of an exaggeration. We do not consider syntactic aspects usually connected with logical systems, such as deduction relations. We merely study ways to interpret certain terms (formulas) on models (coalgebras) in an inductive fashion. One might call it a testing framework, as in~\cite{pavlovic}, but similar systems are often called logics in coalgebraic literature, and we reluctantly stick to this custom.

The idea of using modal logics for coalgebraic trace semantics is not new; it is visible already in~\cite{pavlovic}.
In~\cite{HJS07} it is related to behavioural equivalence, and applied to non-deterministic systems.
A generalized notion of relation lifting is used in~\cite{Cirstea13} to obtain infinite trace semantics,
and applied in~\cite{Cirstea14} to get canonical linear time logics.
In~\cite{kissigKurz}, coalgebraic modal logic is combined with the idea of lifting behaviours to Eilenberg-Moore categories, 
with trace semantics in mind. In~\cite{JSS14}, a connection to modal logics is sketched from the perspective of coalgebraic determinization procedures. In a sense, this paper describes the same connection from the perspective of logic. In~\cite{corina15}, the framework of~\cite{Cirstea13,Cirstea14} is rephrased in terms of coalgebraic modal logic. The result is rather similar to the one we considered, with forgetful modalities to resolve branching. Unlike our approach,~\cite{corina15} relies on $T$ being a monad, and under some more assumptions it studies canonical forgetful modalities that give rise to particularly well-behaved logics. In~\cite{KMPS15,lutz15}, monads feature even more prominently, with the entire behaviour functor embedded in a so-called graded monad.
In~\cite{Goncharov13}, it is embedded in a more complex functor with a so-called observer.

Our main new contribution is the notion of forgetful logic and its ramifications. Basic definitions are provided in Section~\ref{sec:forgetful} and some illustrative examples in Section~\ref{sec:examples}.
We introduce a systematic way of relating trace semantics to determinization, by giving sufficient conditions 
for a given determinization procedure, understood in a slightly more general way than in~\cite{JSS14}, 
to be correct with respect to a given forgetful logic (Section~\ref{sec:determinization}). For instance, this allows showing in a coalgebraic setting that the determinization of alternating automata into non-deterministic automata preserves language semantics.

A correct determinization procedure may not exist in general. In Section~\ref{sec:isomates} we study a situation where a canonical correct determinization procedure exists. It turns out that even in the simple case of non-deterministic automata that procedure is not the classical powerset construction; instead, it relies on a double application of contravariant powerset construction. Interestingly, this is what also happens in Brzozowski's algorithm for automata minimization~\cite{brzozowski}, so as a by-product, we get a new perspective on that algorithm which has recently attracted much attention in the coalgebraic community~\cite{AdamekBHKMS12,BezhanishviliKP12,BBHPRS14}.

Although we do not assume the branching functor $T$ to be a monad, a forgetful logic for $T$ is equivalent to a transformation from $T$ to a certain monad which, in the case of sets, is the double contravariant powerset monad (a special case of the continuation monad). One might say that the continuation monad is rich enough to handle all types of branching that can be ``forgotten'' within our framework.

This paper is an extended version of~\cite{KlinR15}, adding full proofs and a treatment
of the trace semantics of probabilistic systems as a non-trivial instance of the framework.  

\emph{Acknowledgments} We thank Marcello Bonsangue, Helle Hvid Hansen, Ichiro Hasuo, Bart Jacobs and Jan Rutten for discussions. Joost Winter spotted a serious mistake in a previous version of Example~\ref{ex:alt-correct}. We are very grateful to the anonymous reviewer who pointed out Lemma~\ref{lem:monadmap} and whose many insightful comments let us significantly improve the paper.

\section{Preliminaries}\label{sec:prelim}

We assume familiarity with basic notions of category theory (see, e.g.,~\cite{mac1998categories}).
A coalgebra for a functor $B \colon \C \rightarrow \C$ consists of an object $X$ and a map $f \colon X \rightarrow BX$. 
A homomorphism from $f \colon X \rightarrow BX$ to $g \colon Y \rightarrow BY$ is a map $h \colon X \rightarrow Y$
such that $g \circ h = Bh \circ f$. The category of $B$-coalgebras is denoted by $\coalg{B}$.
Algebras for a functor $L$ are defined dually; the category of $L$-algebras and homomorphisms is denoted by $\alg{L}$.

We list a few examples where $\C = \Set$, the category of sets and functions.
  Consider the functor $\powf(A \times -)$, where $\powf$ is the finite powerset functor and $A$ is a fixed set. A coalgebra $f \colon X \rightarrow \powf(A \times X)$ is a finitely
branching labelled transition system: it maps every state to a finite set of next states. Coalgebras
for the functor $(\powf -)^A$ are image-finite labelled transition systems, i.e., the set of next states
for every label is finite. When $A$ is finite the two notions coincide.
A coalgebra  $f \colon X \rightarrow \powf(A \times X + 1)$, where $1= \{*\}$ is a singleton, is a non-deterministic automaton;
a state $x$ is accepting whenever $* \in f(x)$.

Consider the functor $BX = 2 \times X^A$, where 2 is a two-element set of truth values. 
A coalgebra $\langle o, f \rangle \colon X \rightarrow BX$ is a deterministic automaton; a state $x$
is accepting if $o(x)=\true$, and $f(x)$ is the transition function. The composition $B \powf$
yields non-deterministic automata, presented in a different way than above. We shall also consider
$B \powf \powf$-coalgebras, which represent a general version of alternating automata.

Let $\sem$ be a semiring.
Define 
$
  \M X = \{\varphi \in \sem^X \mid \text{supp}(\varphi) \text{ is finite}\}
$ where $\text{supp}(\varphi) = \{x \mid \varphi(x) \neq 0\}$, and  $
 \M(f \colon X \rightarrow Y)(\varphi)(y) = \sum_{x \in f^{-1}(y)}\varphi(x)
$.
A \emph{weighted automaton} is a coalgebra for the functor $\M (A \times - + 1)$.
Let $\Sigma$ be a polynomial functor corresponding to an algebraic signature.
A \emph{top-down weighted tree automaton} is a coalgebra for the functor $\M \Sigma$. 
For $\sem$ the Boolean semiring these are \emph{non-deterministic tree automata}. 
Similar to non-deterministic automata above,
one can present weighted automata as coalgebras for $\sem \times (\M-)^A$.

We note that $\powf$ is a monad, by taking the unit $\eta_X(x) = \{x\}$ and the multiplication $\mu$ to be set union.
More generally, the functor $\M$ extends to a monad,
by taking $\mu_X(\varphi)(x) = \sum_{\psi \in \sem^X} \varphi(\psi) \cdot \psi(x)$. The
case of $\powf$ is obtained by taking the Boolean semiring. 
Notice that the finite support condition is required for $\mu$ to be well-defined.
 
\subsection{Contravariant adjunctions}\label{sec:contr-adj}

The basic framework of coalgebraic logic is formed of two categories $\C$, $\D$ connected by functors $F\colon\C^{\op}\to \D$ and $G\colon\D^{\op}\to \C$ that form an adjunction $F^{\op}\dashv G$. For example, one may take $\C=\D=\Set$ and $F=G=2^-$, for $2$ a two-element set of logical values. The intuition is that objects of $\C$ are collections of processes, or states, and objects of $\D$ are logical theories.

To avoid cluttering the presentation with too much of the  $(-)^{\op}$ notation, we opt to treat $F$ and $G$ as {\em contravariant functors}, i.e., ones that reverse the direction of all arrows (maps), between $\C$ and $\D$. The adjunction then becomes a contravariant adjunction ``on the right'', meaning that there is a natural bijection
\[
	\C(X,G\Phi) \cong \D(\Phi,FX) \qquad \mbox{ for } X\in\C, \Phi\in\D.	
\] 
Slightly abusing the notation, we shall denote both sides of this bijection by $(-)^{\flat}$. Applying the bijection to a map is referred to as transposing the map.

In such an adjunction, $GF$ is a monad on $\C$, whose unit we denote by $\iota\colon\Id\To GF$, and $FG$ is a monad on $\D$, with unit the denoted by $\epsilon\colon\Id\To FG$.
As usual, multiplication of the monad $GF$ is $G\epsilon F \colon GFGF\To GF$, and multiplication of $FG$ is $F\iota G \colon FGFG\To FG$. The counit-unit equations for adjunctions amount to:
\begin{equation}\label{eq:counit-unit}
\vcenter{\xymatrix{
F\ar@{=>}[r]^-{\epsilon F}\ar@{=}[rd] & FGF\ar@{=>}[d]^{F\iota} \\
& F
}}
\qquad\qquad
\vcenter{\xymatrix{
G\ar@{=>}[r]^-{\iota G}\ar@{=}[rd] & GFG\ar@{=>}[d]^{G\epsilon} \\
& G
}}
\end{equation}
Both $F$ and $G$ map colimits to limits, by standard preservation results for adjoint functors.

In what follows, the reader need only remember that $F$ and $G$ are contravariant, i.e., they reverse maps and natural transformations. 
All other functors, except a few that lift $F$ and $G$ to other categories, are standard covariant functors.

\subsection{Coalgebraic modal logic}

We recall an approach to coalgebraic modal logic based on contravariant adjunctions, see, e.g.,~\cite{klin07,jacobssokolova09}. 
Consider categories $\C$, $\D$ and functors $F$, $G$ as in Section~\ref{sec:contr-adj}. 
Given an endofunctor $B\colon\C\to\C$, a {\em coalgebraic logic} to be interpreted on $B$-coalgebras is built of {\em syntax}, i.e., an endofunctor $L\colon\D\to\D$, and {\em semantics}, a natural transformation $\rho\colon LF\To FB$. We will usually refer to $\rho$ simply as a logic.
If an initial $L$-algebra $a\colon L\Phi\to\Phi$ exists then, for any $B$-coalgebra $h\colon X\to BX$, the {\em logical semantics} 
of $\rho$ on $h$ is a map $s^{\flat}\colon X\to G\Phi$ obtained by transposing the map defined by initiality of $a$.
\begin{equation}\label{eq:logsem1}
\vcenter{\xymatrix{
	L\Phi\ar[dd]_a\ar[r]^{Ls} & LFX\ar[d]^{\rho_X}\\
	& FBX\ar[d]^{Fh} \\
	\Phi\ar[r]_s & FX
}}
\end{equation}
The mapping of a $B$-coalgebra $h\colon X\to BX$ to an $L$-algebra $Fh\circ \rho_X\colon LFX\to FX$ determines a contravariant functor $\hat F$ that lifts $F$, i.e., acts as $F$ on carriers.
\begin{equation}\label{eq:liftF}
\vcenter{\xymatrix{
\coalg{B} \ar[d] \ar[r]^-{\hat F} & \alg{L} \ar[d] \\
\C \ar[r]_F & \D
}}
\end{equation}
The functor $\hat F$ has no (contravariant) adjoint in general; later in Section~\ref{sec:isomates} we shall study well-behaved situations when it does.
Notice that $\hat F$ maps coalgebra homomorphisms to algebra homomorphisms, and indeed the logical semantics 
factors through coalgebra homomorphisms, i.e., behavioural equivalence implies logical equivalence. 
The converse holds if $\rho$ is \emph{expressive}, meaning that the logical semantics decomposes as a coalgebra
homomorphism followed by a mono.


\begin{example}\label{ex:logic-da}
	Let $\C = \D = \Set$, $F = G = 2^-$, $B = 2 \times -^A$ and $L = A \times -+1$.
	The initial algebra of $L$ is the set $A^*$ of words over $A$. 
	We define a logic $\rho \colon LF \Rightarrow FB$ as follows:
	$
		\rho_X(*)(o,t) = o$ and $\rho_X(a,\varphi)(o,t) = \varphi(t(a)) 
	$.
	For a coalgebra $\langle o, f \rangle \colon X \rightarrow 2 \times X^A$
	the logical semantics is a map $\tr \colon X \rightarrow 2^{A^*}$, yielding the usual language semantics of the automaton: $\tr(x)(\varepsilon) = o(x)$ 
	for the empty word $\varepsilon$, and
	$\tr(x)(aw) = \tr(f(x)(a))(w)$ for any $a \in A, w \in A^*$.
\end{example}

Note that logical equivalences, understood as kernel relations of logical semantics, are conceptually different from behavioural equivalences typically considered in coalgebra theory, in that they do not arise from finality of coalgebras, but rather from initiality of algebras (albeit in a different category). Fundamentally, logical semantics for coalgebras is defined by induction rather than coinduction. In some particularly well-behaved cases the inductive and coinductive views coincide; we shall study such situations in Section~\ref{sec:isomates}.

A logic $\rho\colon LF\To FB$ gives rise to its {\em mate} $\rho^\flat\colon BG\To GL$, defined by
\begin{equation}\label{eq:mate}
\xymatrix{
	BG\ar@{=>}[r]^-{\iota BG}  & GFBG\ar@{=>}[r]^{G\rho G} & GLFG\ar@{=>}[r]^-{GL\epsilon} & GL,
}
\end{equation}
where $\iota$ and $\epsilon$ are as in Section~\ref{sec:contr-adj}. A routine calculation shows that $\rho$ in turn is the mate of $\rho^\flat$ (with the roles of $F$, $G$, $\iota$ and $\epsilon$ swapped), giving a bijective correspondence between logics and their mates. Some important properties of logics are conveniently stated in terms of their mates; e.g., 
under mild additional assumptions (see~\cite{klin07}), if the mate is pointwise monic then the logic is expressive.
Two simple but useful diagrams show how logics relate to their mates along the basic adjunction:
\begin{lemma}\label{lem:useful}
For any logic $\rho \colon LF\To FB$, the following diagrams commute:
\[\xymatrix{
	B\ar@{=>}[r]^-{\iota B}\ar@{=>}[d]_{B\iota} & GFB\ar@{=>}[d]^{G\rho} \\
	BGF\ar@{=>}[r]_-{\rho^{\flat}F} & GLF
}
\qquad\qquad
\xymatrix{
	L\ar@{=>}[r]^-{\epsilon L}\ar@{=>}[d]_{L\epsilon} & FGL\ar@{=>}[d]^{F\rho^{\flat}} \\
	LFG\ar@{=>}[r]_-{\rho G} & FBG.
}
\]
\end{lemma}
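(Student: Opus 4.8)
The plan is to establish the left-hand diagram by a direct equational computation from the definition of the mate, and then to obtain the right-hand one essentially for free from the symmetry of the mate construction. Throughout, the one thing to keep in mind is that $F$ and $G$ are contravariant, so they reverse the order of composition and the direction of every whiskered transformation.

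For the left diagram I must show $G\rho \circ \iota B = \rho^\flat F \circ B\iota$. First I substitute the definition \eqref{eq:mate} of the mate, whiskered by $F$, to get $\rho^\flat F = GL\epsilon F \circ G\rho GF \circ \iota BGF$, so the right-hand side becomes $GL\epsilon F \circ G\rho GF \circ \iota BGF \circ B\iota$. Next I push the outer unit through the inner one: by naturality of $\iota\colon\Id\To GF$ at the components of $B\iota$ we have $\iota BGF \circ B\iota = GFB\iota \circ \iota B$. This isolates a trailing factor $\iota B$ — exactly the factor appearing on the left-hand side — and leaves $GL\epsilon F \circ G\rho GF \circ GFB\iota$ in front of it.

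It then remains to check that this front part equals $G\rho$. Since all three factors are $G$ applied to something and $G$ is contravariant, I collapse them into the single image $G(FB\iota \circ \rho GF \circ L\epsilon F)$, where the inner composite is a transformation $LF\To FB$. I identify this inner composite with $\rho$ itself as follows: naturality of $\rho$ at the components of $\iota$ rewrites $FB\iota \circ \rho GF$ as $\rho \circ LF\iota$, and then, since $L$ is covariant, $LF\iota \circ L\epsilon F = L(F\iota \circ \epsilon F)$, which the left triangle identity $F\iota \circ \epsilon F = \id_F$ from \eqref{eq:counit-unit} turns into the identity. Hence the front part is $G\rho$, and the two sides of the left diagram agree.

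Finally, the right diagram is literally the left diagram applied to $\rho^\flat$ in place of $\rho$: as noted just after \eqref{eq:mate}, $\rho$ is the mate of $\rho^\flat$ once the roles of $F,G$ and of $\iota,\epsilon$ are swapped, and under this swap the statement $G\rho \circ \iota B = \rho^\flat F \circ B\iota$ becomes precisely $F\rho^\flat \circ \epsilon L = \rho G \circ L\epsilon$. Alternatively one repeats the computation above verbatim in the dual setting, now invoking the right triangle identity $G\epsilon \circ \iota G = \id_G$. I expect the only genuine difficulty to be bookkeeping rather than mathematical content: because $F$ and $G$ reverse arrows, one must read each whiskered transformation in the correct direction and order the composites correctly when collapsing the three $G$-images — once that is done, the proof is just naturality of $\iota$ and $\rho$ together with the two triangle identities.
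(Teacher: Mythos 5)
Your proof is correct and follows essentially the same route as the paper's: the paper's diagram chase for the left square uses exactly your four ingredients (the definition of $\rho^\flat$, naturality of $\iota$, naturality of $\rho$, and the triangle identity from~\eqref{eq:counit-unit}), merely presented as a pasting diagram rather than equationally. Your handling of the right diagram via the $F\leftrightarrow G$, $\iota\leftrightarrow\epsilon$ symmetry is likewise what the paper intends by ``the other diagram is similar.''
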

\begin{proof}
For the first diagram, chase:
\[\xymatrix{
	B\ar@{=>}[rrr]^{\iota B}\ar@{=>}[d]_{B\iota} & & & GFB\ar@{=>}[lld]_{GFB\iota}\ar@{=>}[d]^{G\rho} \\
	BGF\ar@{=>}[r]^-{\iota BGF}\ar@{=>}@/_1pc/[rrrd]_{\rho^{\flat}F} & GFBGF\ar@{=>}[r]_-{G\rho GF} & GLFGF\ar@{=>}[rd]_-{GL\epsilon F} & GLF\ar@{=}[d]\ar@{=>}[l]_-{GLF\iota} \\
	& & & GLF
}\]
where everything commutes, clockwise starting from top-left: by naturality of $\iota$, by naturality of $\rho$, by~\eqref{eq:counit-unit} above, and by definition of $\rho^{\flat}$. 

The other diagram is similar. 
\end{proof}


\noindent There is a direct characterization of logical semantic maps in terms of mates, first formulated in~\cite{pavlovic}. Indeed,
by transposing~\eqref{eq:logsem1} it is easy to check that 
the logical semantics $s^{\flat}\colon X\to G\Phi$ on a coalgebra $h \colon X\to BX$ is a unique map that makes the following ``twisted coalgebra morphism'' diagram commute.
\begin{equation}\label{eq:logsem2}
\vcenter{\xymatrix{
	BX\ar[r]^{Bs^{\flat}} & BG\Phi\ar[d]^{\rho^{\flat}_{\Phi}} \\
	& GL\Phi \\
	X\ar[uu]^h\ar[r]_{s^{\flat}} & G\Phi\ar[u]_{Ga}.
}}
\end{equation}

\subsection{Morphisms of logics} 
Given logics $\rho \colon LF \Rightarrow FB$ and $\theta \colon MF \Rightarrow FK$, a \emph{morphism} 
from $\theta$ to $\rho$ consists of a pair of natural transformations $\tau \colon M \Rightarrow L$
and $\kappa \colon B \Rightarrow K$ such that the following diagram commutes:
\begin{equation}
\vcenter{
\xymatrix{
	MF \ar@{=>}[r]^{\theta}  \ar@{=>}[d]_{\tau F}
		& FK \ar@{=>}[d]^{F\kappa} \\
	LF \ar@{=>}[r]_{\rho} 
		& FB 
}
}
\qquad \text{or, equivalently, } \qquad
\vcenter{
\xymatrix{
	BG \ar@{=>}[r]^{\rho^\flat}  \ar@{=>}[d]_{\kappa G}
		& GL \ar@{=>}[d]^{G\tau} \\
	KG \ar@{=>}[r]_{\theta^\flat} 
		& GM.
}}
\end{equation}
Natural transformations $\tau$ and $\kappa$ as above induce functors
$$
\xymatrix{
	\alg{L} \ar[r]^{\algfun{\tau}} & \alg{M}
}
\qquad \text{ and } \qquad
\xymatrix{
	\coalg{B} \ar[r]^{\coalgfun{\kappa}} & \coalg{K}
}
$$
defined by composition. 
\begin{lemma}\label{lm:logic-morph}
	Suppose $(\tau,\kappa)$ is morphism from $\theta \colon MF \Rightarrow FK$ to $\rho \colon LF \Rightarrow FB$. Then
	the following diagram commutes:
	$$
	\xymatrix{
		\coalg{B} \ar[r]^{\coalgfun{\kappa}} \ar[d]_{\bar{F}_\rho}
		& \coalg{K} \ar[d]^{\bar{F}_\theta} \\
		\alg{L} \ar[r]_{\algfun{\tau}} & \alg{M}
	}
	$$
	where $\bar{F}_\rho$ and $\bar{F}_\theta$ are the liftings of $F$ induced by $\rho$ and $\theta$ respectively, as in~\eqref{eq:liftF}.
\end{lemma}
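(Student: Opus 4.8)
The plan is to verify commutativity of the square by a direct computation, checking it first on objects and then on morphisms. Since all four functors act either as $F$ or as the identity on underlying carriers and maps, the only real content is that the two resulting $M$-algebra structures coincide, and this will turn out to be precisely the defining square of the logic morphism $(\tau,\kappa)$.

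First I would fix a $B$-coalgebra $h\colon X\to BX$ and chase it around both sides. Going down then right, $\bar F_\rho$ sends $h$ to the $L$-algebra $Fh\circ\rho_X\colon LFX\to FX$, and $\algfun{\tau}$ then yields the $M$-algebra with structure map $Fh\circ\rho_X\circ\tau_{FX}$. Going right then down, $\coalgfun{\kappa}$ sends $h$ to the $K$-coalgebra $\kappa_X\circ h\colon X\to KX$, and $\bar F_\theta$ sends this to the $M$-algebra with structure map $F(\kappa_X\circ h)\circ\theta_X$. Both $M$-algebras have carrier $FX$, so it remains only to compare the two structure maps.

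The crucial step is to exploit the contravariance of $F$: since $F$ reverses composition, $F(\kappa_X\circ h)=Fh\circ F\kappa_X$, so the second structure map is $Fh\circ F\kappa_X\circ\theta_X$. Comparing this with $Fh\circ\rho_X\circ\tau_{FX}$, it suffices to establish $\rho_X\circ\tau_{FX}=F\kappa_X\circ\theta_X$. But this is exactly the $X$-component of the first commuting square in the definition of a morphism of logics, namely $F\kappa\circ\theta=\rho\circ\tau F$. Hence the two $M$-algebras are literally equal, and the square commutes on objects.

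Finally I would check morphisms: on a coalgebra homomorphism $p$ both composites act simply as $Fp$, since the pre- and post-composition functors $\algfun{\tau}$ and $\coalgfun{\kappa}$ leave underlying maps unchanged while the liftings apply $F$. I do not expect any genuine obstacle here; the only point requiring care is the handedness introduced by the contravariance of $F$, which is exactly what arranges for the morphism square to apply on the nose.
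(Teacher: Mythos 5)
Your proof is correct and is essentially the paper's own argument: the paper likewise fixes a $B$-coalgebra, writes the two composite $M$-algebra structure maps $Ff\circ F\kappa_X\circ\theta_X$ and $Ff\circ\rho_X\circ\tau_{FX}$ on the common carrier $FX$, and identifies their equality with the defining square $F\kappa\circ\theta=\rho\circ\tau F$ of the logic morphism. Your additional explicit check that both composites send a coalgebra homomorphism $p$ to $Fp$ is a harmless elaboration of what the paper leaves implicit.
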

\begin{proof}
	Let $f \colon X \rightarrow BX$ be a coalgebra, and consider:
	$$
	\xymatrix{
		MFX \ar[r]^{\theta_X} \ar@{=}[d]
			& FKX \ar[r]^{F\kappa_X} 
			& FBX \ar[r]^{Ff} \ar@{=}[d]
			& FX  \ar@{=}[d] \\
		MFX \ar[r]_{\tau_{FX}} 
			& LFX \ar[r]_{\rho_X} 
			& FBX \ar[r]_{Ff} 
			& FX
	}
	$$
	The upper path is $\bar{F}_\theta (\kappa_X \circ f)$, and the lower path is $(\bar{F}_\rho(f)) \circ \tau_X$. The diagram commutes
	since $(\tau, \kappa)$ is a morphism of logics.
\end{proof}
Morphisms of logics apppear in~\cite{ChenJ14}, where the category of logics and morphisms between them is studied. 
The examples in~\cite{ChenJ14} involve a translation of syntax determined by $\tau$. 
Our main interest in morphisms of logics is cases where $\tau = \id$. Then, it is a direct consequence of Lemma~\ref{lm:logic-morph}
that the logical semantics of $\rho$ on a coalgebra $f \colon X \rightarrow BX$ coincides with the logical semantics
of $\theta$ on $\kappa_X \circ f$. In the sequel, in a situation where $\tau=\id$ we simply say that $\kappa$ is a morphism of logics.

\section{Forgetful logics}\label{sec:forgetful}

In most abstract approaches to coalgebraic trace semantics, the behaviour functor under consideration is a composition $TB$ or $BT$, where $T$ is the branching aspect 
and $B$ is the type of observations of interest. 
Our approach is to capture trace semantics as the logical semantics of a suitable logic for $TB$ or $BT$. 
The logics that we consider are defined as the composition of a logic for $B$ and a special kind of logic for $T$ which has trivial syntax. 
This special logic for $T$ specifies how the branching behaviour should be ``forgotten'' in the resulting logical theory.

Logics for composite functors can often be obtained from logics of their components. Consider functors $B,T\colon\C\to\C$ and 
logics for them $\rho\colon LF\To FB$ and $\alpha\colon NF\To FT$, for some functors $L,N\colon\D\to\D$. One can then define logics for the functors $TB$ and $BT$:
\[
	\alpha\circledcirc\rho = \alpha B \circ N\rho\colon NLF\To FTB, \qquad \rho\circledcirc\alpha = \rho T \circ L\alpha\colon LNF \To FBT.
\]
It is easy to see that taking the mate of a logic respects this composition operator, i.e., that $(\alpha\circledcirc\rho)^\flat = \alpha^\flat\circledcirc\rho^\flat$. 
Such compositions of logics appear in~\cite{Jacobs:coalg} and were studied in a slightly more concrete setting in~\cite{cirsteaPattinson04,SP11}. In~\cite[Lemma 3.12]{ChenJ14}, it is shown how to (horizontally) compose
morphisms of logics, turning $\circledcirc$ into a bifunctor.

We shall be interested in the case where the logic for $T$ has a trivial syntax; in other words, where $N=\Id$. Intuitively speaking, we require a logic for $T$ that consists of a single unary operator, which could therefore be elided in a syntactic presentation of logical formulas. The semantics of such an operator is defined by a natural transformation $\alpha\colon F\To FT$ or equivalently by its mate $\alpha^\flat\colon TG\To G$. Intuitively, the composite logics $\alpha\circledcirc\rho$ and $\rho\circledcirc\alpha$, when interpreted on $TB$- and $BT$-coalgebras respectively disregard, or forget, the aspect of their behaviour related to the functor $T$, in a manner prescribed by $\alpha$. 
We call logics obtained in this fashion {\em forgetful logics}.

Transformations $\alpha\colon F\To FT$ (and their mates $\alpha^\flat$) are also in bijective correspondence with natural transformations $\alpha^\dagger \colon T\To GF$. Indeed, define
\[
	\alpha^\dagger = G\alpha \circ \iota T \qquad\mbox{or equivalently}\qquad \alpha^\dagger = \alpha^\flat F\circ T\iota,
\]
and recover
\[
	\alpha = F\alpha^\dagger\circ \epsilon F \qquad\mbox{and} \qquad \alpha^\flat = G\epsilon\circ\alpha^\dagger G
\]
in what is easily seen to be mutually inverse operations. This means that, intuitively, a forgetful logic for $T$ is equivalent to an {\em encoding} of $T$ into the monad $GF$. In all examples considered in the next section $\alpha^\dagger$ is (pointwise) monic, which justifies the name ``encoding'', but we do not use the monicity for anything and we do not have an understanding of its significance.

As long as we do not assume $T$ to be a monad, it makes no sense to ask e.g.~whether $\alpha^\dagger$ is a monad morphism. However, composing forgetful logics for multiple behaviour functors does agree with the multiplication structure of $GF$. Specifically, for $\alpha \colon F\To FT$ and $\beta \colon F\To FS$, one may consider the composite $\alpha\circledcirc\beta \colon F\To FTS$. The corresponding encoding $(\alpha\circledcirc\beta)^\dagger \colon TS\To GF$ is then equal to:
\[\xymatrix{
	TS \ar@{=>}[r]^-{\alpha^\dagger\beta^\dagger} & GFGF\ar@{=>}[r]^-{G\epsilon F} & GF.
}\]
Encodings $\alpha^\dagger$ will be technically useful in Section~\ref{sec:det-preproc}.

\section{Examples}\label{sec:examples}

We instantiate the setting of Section~\ref{sec:forgetful}
and use forgetful logics to obtain trace semantics for several concrete
types of coalgebras: non-deterministic automata, transition systems, alternating automata, weighted tree automata and
probabilistic systems.


In the first few examples we let $\C = \D = \Set$ and $F = G = 2^-$,
and consider $TB$ or $BT$-coalgebras, where $T = \powf$ is the finite powerset
functor. Our examples involve the logic $\alpha \colon 2^- \Rightarrow 2^\powf$ defined by:
\begin{equation}\label{eq:alpha}
	\alpha_X(\varphi)(S) = \true \iff \exists x \in S . \varphi(x) = \true. 
\end{equation}
This choice of $F$ and $G$ has been studied thoroughly in the field of coalgebraic logic, and our $\alpha$ is an example of the standard notion of predicate lifting~\cite{Jacobs:coalg,kupkepattinson11} corresponding to the so-called diamond modality.
Its mate $\alpha^\flat \colon \powf 2^- \Rightarrow 2^-$ and the corresponding encoding $\alpha^\dagger \colon \powf\To 2^{2^-}$ are as follows: 
\begin{align*}
	\alpha_\Phi^\flat(S)(w) = \true &\iff \exists \varphi \in S . \varphi(w) = \true \\
	\alpha_X^{\dagger}(U)(\varphi) = \true &\iff \exists x \in U . \varphi(x) = \true
\end{align*}
Here and in all examples below, $\powf$ could be replaced by the full powerset $\pow$ without any problems.

\begin{example}\label{ex:nda}
We define a forgetful logic $\alpha \circledcirc \rho$ for the functor $\powf (A \times - +1)$ whose coalgebras 
are non-deterministic automata, so that the logical semantics is the usual language semantics. To this end, we let:
\begin{itemize}
	\item $\C = \D = \Set$, $F=G=2^-$;
	\item $T = \powf$, $B=L=(A \times - +1)$;
	\item $\alpha$ be as in~\eqref{eq:alpha}, and
	\item $\rho$ be defined by its mate $\rho^\flat \colon A\times 2^-+1 \rightarrow 2^{A \times - + 1}$ as follows:
\begin{align*}
	&\rho^\flat_\Phi(*)(t) = \true \iff t = *  \qquad \rho^\flat_\Phi(a,\varphi)(t) = \true \iff t = (a,w) \text{ and } \varphi(w) = \true \,.
\end{align*}
for any set $\Phi$.
\end{itemize}
The choice of $L$ is motivated by the fact that the initial algebra of $A \times - + 1$ is $A^*$, hence the
logical semantics will be a map from states to languages (elements of $2^{A^*}$). Now the logical 
semantics of the logic $\alpha \circledcirc \rho$ on an automaton $f \colon X \rightarrow \powf B X$ 
is the map $\tr$ from~\eqref{eq:logsem2}, i.e., the unique map that makes the following diagram commute: 
$$
\xymatrix@C=1.2cm{
	X \ar[d]_f \ar[rrr]^{\tr} & & & 2^{A^*} \ar[d] \\
	\powf (A \times X + 1) \ar[r]_-{\powf B \tr} 
	&\powf (A \times 2^{A^*} + 1) \ar[r]_-{\powf \rho^\flat_{A^*}} 
		&\powf(2^{A \times A^* + 1}) \ar[r]_-{\alpha^\flat_{LA^*}}
		&2^{A \times A^* + 1}
}
$$
It is easy to calculate (see Appendix~\ref{sec:details-examples}) that for any $x \in X$:
\begin{align*}
	\tr(x)(\varepsilon) = \true & \iff *\in f(x), \\
	\tr(x)(aw) = \true & \iff \exists y \in X . (a,y) \in f(x) \text{ and } \tr(y)(w) = \true,
\end{align*}
for $\varepsilon$ the empty word, and
for all $a \in A$ and $w \in A^*$.
\end{example}

Note that the logic $\rho$ in the above example is expressive. One may expect that given a different expressive logic $\theta$
involving the same functors, the forgetful logics $\alpha \circledcirc \rho$ and $\alpha \circledcirc \theta$ yield the same logical 
equivalences, but this is not the case. For instance, define $\theta^\flat \colon BG \Rightarrow GL$ 
as $\theta_\Phi^\flat(*)(t) = \true$ for all $t$, and $\theta^\flat_\Phi(a,\varphi) = \rho^\flat_\Phi(a,\varphi)$. This logic is expressive as well
(since $\theta^\flat$ is componentwise monic) but in the semantics of the forgetful logic $\alpha \circledcirc \theta$, 
information on final states is discarded.

\begin{example}[Length of words]
The initial algebra of the functor $L$ defined by $LX = X + 1$ is $\Phi = \mathbb{N}$, the set of natural numbers. 
Define a logic for $BX = A \times X + 1$ by its mate $\rho^\flat \colon A \times 2^- + 1 \Rightarrow 2^{- + 1}$ as follows:
\begin{align*}
	\rho^\flat(*)(t) = \true \iff t = *  \qquad 
	\rho^\flat(a,\varphi)(t) = \true \iff t = x \text{ and } \varphi(x) = \true \,.
\end{align*}
Note that this logic is not expressive.
With the above $\alpha$ (Equation~\ref{eq:alpha}), we have a logic $\alpha \circledcirc \rho$,
and given any $f \colon X \rightarrow \powf(A \times X + 1)$, this yields the following map
$\tr \colon X \rightarrow 2^{\Phi}$:
\begin{align*}
	\tr(x)(0) = \true 
		& \iff * \in f(x) , \\
	\tr(x)(n+1) = \true 
		& \iff \exists a \in A, y \in X. (a,y) \in f(x) \text{ and } \tr(y)(n) = \true \, .
\end{align*}	
Thus, $\tr(x)$ is the binary sequence which is $\true$ at position $n$ iff
the automaton $f$ accepts a word of length $n$, starting in state $x$.
\end{example}



\begin{example}[Labelled transition systems]\label{ex:log:lts}
In this example we consider the finite traces of labelled transition systems of the form $f \colon X \rightarrow (\powf X)^A$, i.e., 
$BT$-coalgebras where $BX = X^A$ and $TX = \powf X$. To this end, let $LX = A \times X + 1$. 
Define $\rho^\flat \colon (2^-)^A \Rightarrow 2^{A \times - + 1}$
as follows:
\begin{equation}\label{eq:ts-logic}
	\rho_\Phi^\flat(\varphi)(*) = \true  \qquad \rho_\Phi^\flat(\varphi)(a,w)=\varphi(a)(w) \, .
\end{equation}
Then the logical semantics $\tr \colon X \rightarrow 2^{A^*}$ of $\rho \circledcirc \alpha$ on a 
transition system $f \colon X \rightarrow (\powf X)^A$ is given by $\tr(x)(\varepsilon) = \true$
and $\tr(x)(aw)= \true$ iff $\tr(y)(w)=\true$ for some $y \in f(x)(a)$.
\end{example}

\begin{example}[Non-deterministic automata as $BT$-coalgebras]\label{ex:nda-bt}
Consider the functor $BX = 2 \times X^A$. 
Let $LX = A \times X + 1$, let $\rho^\flat \colon 2 \times (2^-)^A \Rightarrow 2^{A \times - + 1}$
be the mate of the logic $\rho$ given in Example~\ref{ex:logic-da}; explicitly, it is
the obvious isomorphism given by manipulating exponents:
\begin{equation}\label{eq:aut-logic}
	\rho_\Phi^\flat(o, \varphi)(*) = o  \qquad \rho_\Phi^\flat(o,\varphi)(a,w)=\varphi(a)(w)
\end{equation}
The logical semantics $\tr \colon X \rightarrow 2^{A^*}$ of $\rho \circledcirc \alpha$ on a 
coalgebra $\langle o, f \rangle \colon X \rightarrow 2 \times (\powf X)^A$
is the usual language semantics, i.e., for any $x \in X$ we have:
\begin{align*}
  \tr(x)(\varepsilon) =o(x) \qquad  
  \tr(x)(aw) = \true \iff \tr(y)(w)=\true \text{ for some }y \in f(x)(a) \, .
\end{align*}

Non-determinism can be resolved differently: in contrast to~\eqref{eq:alpha}, consider $\beta \colon \powf\To \powf 2^-$, and the corresponding $\beta^{\flat} \colon \powf 2^-\To\powf$ and $\beta^{\dagger} \colon \powf\To 2^{2^-}$, given by 
\begin{align*}
	\beta_X(\varphi)(S) = \true &\iff \forall x \in S . \varphi(x) = \true \\
	\beta_\Phi^\flat(S)(w) = \true &\iff \forall \varphi \in S . \varphi(w) = \true \\
	\beta_X^{\dagger}(U)(\varphi) = \true &\iff \forall x \in U . \varphi(x) = \true
\end{align*}
Similarly to~\eqref{eq:alpha}, $\beta$ is a predicate lifting that corresponds to the so-called box modality.
The semantics $\tr$ induced by the forgetful logic $\rho \circledcirc \beta$ accepts a word 
if \emph{all} paths end in an accepting state: $\tr(x)(\varepsilon)=o(x)$, and $\tr(x)(aw)= \true$ iff $\tr(y)(w)=\true$ for all $y \in f(x)(a)$. 
We call this the conjunctive semantics.
In automata-theoretic terms, this is the language semantics for ($B\powf$-coalgebras understood as) 
co-nondeterministic automata, i.e., alternating automata with only universal states.
\end{example}


\noindent
{\em Some non-examples.} It is not clear how to use forgetful logics to give a conjunctive semantics 
to coalgebras for $\powf(A \times X + -)$; simply using $\beta$ together with $\rho$ from Example~\ref{ex:nda}
does not yield the expected logical semantics.
Also, transition systems as $\powf (A \times -)$-coalgebras do not work well; with $\alpha$
as in~\eqref{eq:alpha} the logical semantics of a state with no successors is always empty, 
while it should contain the empty trace.


\begin{example}[Alternating automata]\label{ex:alt-aut}
Consider $B \powf \powf$-coalgebras with $B = 2 \times -^A$. We give a forgetful logic by combining
$\rho$, $\alpha$, and $\beta$ from the previous example 
(more precisely, the logic is $(\rho \circledcirc \alpha) \circledcirc \beta$); recall that $\alpha$ 
and $\beta$ resolve the non-determinism by disjunction
and conjunction respectively.
The logical semantics on a coalgebra $\langle o, f \rangle \colon X \rightarrow B \powf \powf X$ 
then is the map $\tr$ in the following diagram (see~\eqref{eq:logsem2}):
\begin{equation}\label{eq:aa-trace}
\vcenter{
\xymatrix@C=1.2cm{
	X \ar[d]_{\langle o, f \rangle} \ar[rrrr]^{\tr} & & & & 2^{A^*} \ar[d] \\
	B \powf \powf X \ar[r]_-{B \powf \powf \tr} 
	&B \powf \powf 2^{A^*} \ar[r]_-{B \powf \beta^\flat_{A^*}}
	&B \powf 2^{A^*} \ar[r]_-{B \alpha^\flat_{A^*}} 
	&B 2^{A^*} \ar[r]_-{\rho^\flat_{A^*}}  
		&2^{LA^*}
}
}
\end{equation}
Spelling out the details for a coalgebra $\langle o, f \rangle \colon X \rightarrow 2 \times (\powf \powf X)^A$ yields, for any $x \in X$: $\tr(x)(\varepsilon)=o(x)$ and for any $a \in A$ and $w \in A^*$:
$\tr(x)(aw) = \true$
iff there is $S \in f(x)(a)$ such that $\tr(y)(w) = \true$ for all $y \in S$ (see Appendix~\ref{sec:details-examples}).

\end{example}

\begin{example}[Weighted tree automata]\label{ex:wta}
In this example we let $\C = \D = \Set$ and $F = G = \sem^-$ for a semiring $\sem$.
We consider coalgebras for $\M \Sigma$ (Section~\ref{sec:prelim}), where $\Sigma$ is a polynomial functor corresponding
to a signature. The initial algebra of $\Sigma$ is carried by
the set of finite $\Sigma$-trees, denoted by $\Sigma^* \emptyset$.	
Define $\rho \colon \Sigma F \Rightarrow F \Sigma$ by cases on the operators $\sigma$ in the signature:
$$
\rho_X(\sigma(\varphi_1, \ldots, \varphi_n))(\tau(x_1, \ldots, x_m)) = 
\begin{cases}
	\prod_{i = 1..n}{\varphi_i(x_i)} & \text{if } \sigma = \tau \\
	0 & \text{otherwise}
\end{cases}
$$
where $n$ is the arity of $\sigma$. 
Define $\alpha \colon \sem^- \Rightarrow \sem^\M$ by its mate:
$
	\alpha^\flat_\Phi(\varphi)(w) = \sum_{\psi \in \sem^\Phi} \varphi(\psi)\cdot \psi(w)
$.
Notice that $\alpha$ and $\rho$ generalize the logics of Example~\ref{ex:nda}.

The logical semantics of $\alpha \circledcirc \rho$ on a weighted tree automaton $f \colon X \rightarrow \M \Sigma X$ 
is the unique map	$\tr \colon X \rightarrow \sem^{\Sigma^* \emptyset}$ making the following diagram commute:
\begin{equation}\label{eq:wta-trace}
\xymatrix@C=1.3cm{
	X \ar[d]_f \ar[rrr]^{\tr} & & & \sem^{\Sigma^* \emptyset} \ar[d] \\
	\M \Sigma X \ar[r]_-{\M \Sigma \tr} 
	&\M \Sigma \sem^{\Sigma^* \emptyset} \ar[r]_-{\M \rho^\flat_{\Sigma^*\emptyset}} 
		&\M\sem^{\Sigma \Sigma^* \emptyset} \ar[r]_-{\alpha^\flat_{\Sigma \Sigma^*\emptyset}}
		&\sem^{\Sigma \Sigma^* \emptyset}
}
\end{equation}
This means that for any tree $\sigma(t_1, \ldots, t_n)$ and any $x \in X$ we have (see Appendix~\ref{sec:details-examples}):
\begin{equation*}
	\tr(x)(\sigma(t_1, \ldots, t_n)) = \sum_{x_1, \ldots, x_n \in X} f(x)(\sigma(x_1, \ldots, x_n)) \cdot \prod_{i = 1..n} \tr(x_i)(t_i)
\end{equation*}
As a special case, we obtain for any \emph{weighted automaton}
$f \colon X \rightarrow \M(A \times X + 1)$ a unique map $\tr \colon X \rightarrow \sem ^ {A^*}$ so that for any $x \in X$, $a \in A$ and $w \in A^*$:
$\tr(x)(\varepsilon) = f(x)(*)$ and $\tr(x)(aw) = \sum_{y \in X} f(x)(a,y) \cdot \tr(y)(w)$.
For $\sem$ the Boolean semiring we get 
the usual semantics of tree automata:
$\tr(x)(\sigma(t_1, \ldots, t_n)) = \true$ iff there are $x_1, \ldots, x_n$ such that $\sigma(x_1, \ldots, x_n) \in f(x)$
and for all $i \leq n \colon \tr(x_i)(t_i)=\true$.

The $\Sigma$-algebra $\hat{F}(X,f)$ (see~\eqref{eq:liftF}) is a \emph{deterministic bottom-up tree automaton}.
It corresponds to the top-down automaton $f$, 
in the sense that the semantics $\tr$ of $f$ is the transpose of the unique homomorphism 
$s \colon \Sigma^*\emptyset \rightarrow \sem^X$ arising by initiality; 
the latter is the usual semantics of bottom-up tree automata.
\end{example}

\begin{example}[Probabilistic systems]\label{ex:probsys}
Consider generative probabilistic transition systems~\cite{vGSS95} with explicit termination, modeled as coalgebras $f\colon X\to \Delta(A\times X+1)$, where $\Delta$ is the finitely supported probability distribution functor on $\Set$. One would like to interpret sequences of labels from $A$ as {\em completed} traces for such coalgebras, i.e., ones ending with a transition to the unique element of $1$, and assign probabilities to them.

Although probability distributions on a set $X$ can be seen as functions from $X$ to the interval $[0,1]$, techniques of Example~\ref{ex:wta} are not directly applicable, since $[0,1]$ is not a semiring in the expected sense: addition is not a total operation. One could replace $[0,1]$ with the semiring of nonnegative real numbers and proceed as in Example~\ref{ex:wta}, but the resulting trace semantics would obscure an important property of probabilistic traces: every trace has probability at most $1$. Actually, in the example considered here, an even stronger property holds: for every process, probabilities of all complete traces generated from it form a subprobability distribution. We wish to design a forgetful logic framework that would make this property apparent.

To this end, put $\C=\Set$ and $\D=\PCM$, the category of partial commutative monoids (see~\cite{pcm} for details). A partial commutative monoid is a set $X$ together with a unit $0\in X$ and a partial binary addition function $\ovee$ on $X$ that is commutative and associative whenever defined. A morphism of such monoids is a function that preserves units and addition, whenever defined. A paradigmatic example of a partial commutative monoid is the interval $[0,1]$ with $0$ as unit and addition defined whenever the result is at most $1$.

The obvious forgetful functor $U\colon \PCM\to\Set$ has a left adjoint $(-)_{\bot}\colon \Set\to\PCM$ that assigns to every set $X$ the partial monoid $X+\bot$ with $\bot$ as the unit, and addition defined by $\bot\ovee x=x\ovee \bot=x$ and undefined otherwise.

The category $\PCM$ is symmetric monoidal closed~\cite{pcm} with the internal hom-functor $\Phi\multimap \Psi=\hom_{\PCM}(\Phi,\Psi)$ with the constant function $0(\phi)=0_{\Psi}$ as the unit, and addition defined by
\[
	(f\ovee g)(\phi) = f(\phi)\ovee g(\phi)
\]
if the addition on the right is defined for every $\phi\in\Phi$, and $f\ovee g$ undefined otherwise. The symmetric monoidal closed structure implies that the functor $-\multimap \Psi$ is contravariant self-adjoint for any partial commutative monoid $\Psi$.

Define contravariant functors
\[
	FX = X_{\bot}\multimap [0,1] 
	\qquad \mbox{and} \qquad
	G\Phi = \hom_{\PCM}(\Phi,[0,1]).
\]
They form a contravariant adjunction: a composition of the adjunction $(-)_{\bot}\dashv U$ with the contravariant self-adjunction of $-\multimap [0,1]$. In other words, there is a bijection
\[
	X \to \hom_{\PCM}(\Phi,[0,1]) \cong \hom_{\PCM}(\Phi,X_{\bot}\multimap[0,1])
\]
natural in $X\in\Set$ and $\Phi\in\PCM$.

Define $\alpha\colon F\To F\Delta$ by its mate $\alpha^{\flat} \colon \Delta G\To G$:
\[
	\alpha^{\flat}_{\Phi}(\delta)(\phi) = \underset{f\in G\Phi}{\scalebox{2}{$\ovee$}}\delta(f)\cdot f(\phi).
\]
This takes values in $[0,1]$, since
\[
	0\leq \underset{f\in G\Phi}{\scalebox{2}{$\ovee$}}\delta(f)\cdot f(\phi) \leq \underset{f\in G\Phi}{\scalebox{2}{$\ovee$}}\delta(f) = 1.
\]

$\PCM$ has products, defined as expected as cartesian products on carriers. Define a functor $L$ on $\PCM$ by
\[
	L\Phi = \Phi^A_{\omega}\times \{\top\}_{\bot}
\]
where $\Phi^A_{\omega}$, for the set $A$ of transition labels, is the $A$-fold product of $\Phi$ restricted to those tuples that have the unit $0_{\Phi}$ on all but finitely many components. If $A$ is finite then this is simply the $A$-fold product of $\Phi$.

Elements of $L\Phi$ are $A$-indexed families of elements of $\Phi$ that are $0_{\Phi}$ almost everywhere, with an additional component that is either $\top$ or $\bot$, with addition defined componentwise (note that $\top\ovee\top$ is undefined).

Consider the set $\Lambda=\pow_{\omega}(A^*)$ of all finite sets of $A$-traces, considered as a partial commutative monoid with $\emptyset$ as the unit, and set union as addition, but defined only for disjoint sets. This partial monoid carries an $L$-algebra structure $h\colon L\Lambda\to\Lambda$ defined by:
\begin{align*}
	h(T_a,T_b,\ldots,\bot) &= \{aw\mid a\in A,\ w\in T_a\} \\
	h(T_a,T_b,\ldots,\top) &=  \{aw\mid a\in A,\ w\in T_a\} \cup \{\epsilon\},
\end{align*}
for any $T_a,T_b,\ldots\subseteq A^*$. It is easy check that $h$ is a bijection and a morphism in $\PCM$. In particular, $h$ takes values in finite sets thanks to the restriction of $\Phi^A_{\omega}$ to tuples that are $\emptyset$ almost everywhere.

Moreover, $h$ is an initial $L$-algebra. To see this, consider any algebra $k\colon L\Phi\to\Phi$. The unique algebra map $f\colon\Lambda\to\Phi$ from $h$ to $k$ is defined by induction on the length of the longest trace in elements of $\Lambda$:
\begin{align*}
	f(\emptyset) &= 0_{\Phi} = k(0_{\Phi},0_{\Phi},\ldots,\bot) \\
	f(\{\epsilon\}) &= k(0_{\Phi},0_{\Phi},\ldots,\top) \\
	f(T) &= k(f(T_a),f(T_b),\ldots,\bot) && \text{if } \epsilon\not\in T \\
	f(T) &= k(f(T_a),f(T_b),\ldots,\top) && \text{if } \epsilon\in T
\end{align*}
where $T_a=\{w\in A^* \mid aw\in T\}$, for $T\subseteq A^*$ and $a\in A^*$. This is a well-formed inductive definition since each $T_a$ only contains traces strictly shorter than the longest trace in $T$. It is also a partial monoid morphism, since if $T$ and $T'$ are disjoint sets of traces then $T_a$ and $T'_a$ are disjoint for each $a\in A$, and at most one of $T$ and $T'$ contains the empty word $\epsilon$. The fact that $f$ is an algebra morphism follows directly from its definition, and its uniqueness follows by routine induction.

A logic $\rho\colon LF\To FB$, where $BX=A\times X+1$, can be defined by its mate $\rho^{\flat}\colon BG\To GL$:
\begin{align*}
	\rho^{\flat}_{\Phi} &: A\times\hom_{\PCM}(\Phi,[0,1])+1\to \hom_{\PCM}(\Phi^A\times\{\top\}_{\bot},[0,1]) \\
	\rho^{\flat}_{\Phi}(a,h)(\phi_a,\phi_b,\ldots,x) &= h(\phi_a) \qquad \mbox{for } x\in\{\top,\bot\} \\
	\rho^{\flat}_{\Phi}(\ast)(\phi_a,\phi_b,\ldots,x) &= 
		\left\{\begin{array}{l}
			0 \quad \mbox{if }x=\bot \\
			1 \quad \mbox{if }x=\top
		\end{array}\right. 
\end{align*}
where $\ast$ is the unique element of $1$. Both $\alpha$ and $\rho$ are easily seen to be natural, so this completes an instance of the framework of forgetful logics. 

As a result, for any probabilistic transition system $f\colon X\to \Delta(A\times X+1)$ we obtain a map $s^{\flat}\colon X\to \hom_{\PCM}(\Lambda,[0,1])$ that assigns, to every state $x\in X$, a map from finite sets of $A$-traces to the interval $[0,1]$ that is additive as far as disjoint sets of traces are concerned. Such a map gives (indeed, is equivalent to) a subprobability distribution $\delta$ on the set of all $A$-traces. Indeed, put $\delta(w)=s^{\flat}(\{w\})$; then
\[
	\sum_{w\in A^*}\delta(w) = \sup_{W\subseteq_{\textrm{fin}}A^*}\left(\sum_{w\in W}\delta(w)\right) = 
	\sup_{W\subseteq_{\textrm{fin}}A^*}s^{\flat}(W) \leq 1.
\]

Note that the partial monoid homomorphism $s\colon\Lambda\to (X_{\bot}\multimap[0,1])$ does {\em not} map traces to any distributions of states. 

In~\cite[Sec.~7.2]{JSS14}, where another coalgebraic approach to generative probabilistic systems was developed, the subprobability distribution monad had to be used instead of $\Delta$. The reason for that was that even though a state in a generative probabilistic system determines a full probability distribution on successors (including immediate termination, i.e.~the empty trace, as a degenerate successor), the finite complete traces of a state do not form a probability distribution, but only a subprobability distribution, due to possibility of infinite traces. In the forgetful logic approach the branching type functor $\Delta$ is distinguished from the functor $G$ used to collect a structure of traces, so the distribution monad can be used. Note that although the subdistribution monad is naturally isomorphic to the functor $\Delta(-+1)$ that is present in our behaviour functor, the $+1$ component here is not a part of the branching structure: it models trace termination and it is very much a part of the trace behaviour structure. In~\cite{JSS14}, the subdistribution monad is used on top of that, with an additional $+1$ component necessary to handle global {\em non}termination.

The technique developed in this example does not work for {\em reactive} probabilistic transition systems modeled as coalgebras $f\colon X\to [0,1]\times (\Delta X)^A$. Technically, it is not clear how to find a functor $L$ on $\PCM$ with a natural transformation $\rho\colon LF\To FB$ for $BX=[0,1]\times X^A$, that would model trace semantics as expected. This is not surprising, as under a standard probabilistic semantics of reactive systems, traces accepted from a fixed state do not form a (sub)probability distribution.
 
\end{example}

\section{Forgetful logics for monads}\label{sec:monads}

In most coalgebraic attempts to trace semantics~\cite{Cirstea13,Goncharov13,JSS14,kissigKurz,KMPS15,powerturi}, the functor $T$, which models the branching aspect of system behaviour, is assumed to be a monad. The basic definition of a forgetful logic is more relaxed in that it allows an arbitrary functor $T$ but one may notice that in all examples in Section~\ref{sec:examples}, $T$ is a monad.

In coalgebraic approaches cited above, the structure of $T$ is resolved using monad multiplication $\mu\colon TT\To T$. Forgetful logics use transformations $\alpha\colon  F\To FT$ with their mates $\alpha^\flat\colon TG\To G$ for the same purpose. If $T$ is a monad, it will be useful to assume a few basic axioms analogous to those of monad multiplication:

\begin{definition}\label{def:action}
Let $(T, \eta, \mu)$ be a monad. 
A natural transformation $\alpha^\flat \colon T G \Rightarrow G$ is a \emph{($T$)-action} (on $G$) if
the following diagram commutes:
$$
\xymatrix{
	TTG \ar@{=>}[d]_{\mu G} \ar@{=>}[r]^{T\alpha^\flat} & TG \ar@{=>}[d]_-{\alpha^\flat} & G \ar@{=>}[l]_-{\eta G} \ar@{=}[dl]\\
	TG \ar@{=>}[r]_{\alpha^\flat} & G &
}
$$
i.e., if each component of $\alpha^\flat$ is an Eilenberg-Moore algebra for $T$.
\end{definition}
Monad actions on functors are to monads as monoid actions on sets are to monoids.

Logics $\alpha$ whose mates are monad actions have a characterization in terms of their corresponding encodings $\alpha^\dagger:T\To GF$ defined as in Section~\ref{sec:forgetful}:
\begin{lemma}\label{lem:monadmap}
For any $\alpha \colon F\To FT$, the mate $\alpha^\flat \colon TG\To G$ is a monad action if and only if $\alpha^\dagger \colon T\To GF$ is a monad morphism.
\end{lemma}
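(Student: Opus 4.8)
The plan is to prove both directions at once by transporting each defining equation across the bijection between natural transformations $TTG\To G$ and natural transformations $TT\To GF$. Concretely, I would first spell out the two conditions explicitly. By Definition~\ref{def:action}, the mate $\alpha^\flat$ is a $T$-action iff the \emph{unit law} $\alpha^\flat\circ\eta G=\id_G$ and the \emph{associativity law} $\alpha^\flat\circ\mu G=\alpha^\flat\circ T\alpha^\flat$ both hold. On the other side, $\alpha^\dagger\colon T\To GF$ is a monad morphism into the monad $(GF,\iota,G\epsilon F)$ iff $\alpha^\dagger\circ\eta=\iota$ and $\alpha^\dagger\circ\mu=G\epsilon F\circ\alpha^\dagger GF\circ T\alpha^\dagger$.

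The first step is to note that the encoding correspondence of Section~\ref{sec:forgetful}, which is defined for an arbitrary functor in place of $T$, supplies a bijection $(-)^\ddagger$ between transformations $SG\To G$ and transformations $S\To GF$ for any functor $S$, given by $\gamma^\ddagger=\gamma F\circ S\iota$ with inverse $d\mapsto G\epsilon\circ dG$. I would use it for $S=\Id$, $S=T$ and $S=TT$. The key auxiliary fact is a \emph{pull-out} identity: for $\gamma\colon SG\To G$ and any $\sigma\colon S'\To S$ one has $(\gamma\circ\sigma G)^\ddagger=\gamma^\ddagger\circ\sigma$, which follows from naturality of $\sigma$ at the components of $\iota$, i.e.\ $\sigma GF\circ S'\iota=S\iota\circ\sigma$. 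Since $\alpha^\dagger=\alpha^\flat F\circ T\iota=(\alpha^\flat)^\ddagger$ and $(\id_G)^\ddagger=\iota$, applying the pull-out identity with $\gamma=\alpha^\flat$ and $\sigma=\eta$ immediately rewrites the action unit law $\alpha^\flat\circ\eta G=\id_G$ as $\alpha^\dagger\circ\eta=\iota$; because $(-)^\ddagger$ is a bijection, these two equations are equivalent, which settles the unit laws.

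For the associativity laws I would transport both sides of $\alpha^\flat\circ\mu G=\alpha^\flat\circ T\alpha^\flat$ through $(-)^\ddagger$ with $S=TT$. The left-hand side is handled by the same pull-out identity with $\sigma=\mu$, giving $(\alpha^\flat\circ\mu G)^\ddagger=\alpha^\dagger\circ\mu$, which is exactly the left-hand side of the monad-morphism law. The right-hand side is the crux, since $T\alpha^\flat$ is not of the form $\sigma G$ and must be treated by hand: expanding the definition gives $(\alpha^\flat\circ T\alpha^\flat)^\ddagger=\alpha^\flat F\circ T\alpha^\flat F\circ TT\iota$, and here I would recognize $T\alpha^\flat F\circ TT\iota=T(\alpha^\flat F\circ T\iota)=T\alpha^\dagger$ and then rewrite $\alpha^\flat F=G\epsilon F\circ\alpha^\dagger GF$ using the recovery formula $\alpha^\flat=G\epsilon\circ\alpha^\dagger G$. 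This produces precisely $G\epsilon F\circ\alpha^\dagger GF\circ T\alpha^\dagger$, the right-hand side of the monad-morphism law, and in particular shows where the multiplication $G\epsilon F$ of the monad $GF$ enters. Since $(-)^\ddagger$ is a bijection, the associativity law and the multiplication law are equivalent, and together with the unit case this yields the claimed biconditional. I expect this right-hand-side computation to be the only real obstacle; everything else is naturality and the counit--unit equations~\eqref{eq:counit-unit}.
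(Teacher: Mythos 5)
Your proposal is correct, and it takes a genuinely different route from the paper's own proof. The paper (Appendix~\ref{app:sec5}) proves the two implications separately, each by an explicit diagram chase: assuming the action axioms it verifies the monad-morphism axioms for $\alpha^\dagger$ using naturality of $\eta$, $\mu$, $\alpha^\flat$ and the equations~\eqref{eq:counit-unit}, and then, assuming the morphism axioms, it verifies the action axioms for $\alpha^\flat$ by a second, independent chase. You instead organize the whole argument as transport of equations along the bijection $(-)^\ddagger$ between transformations $SG\To G$ and $S\To GF$, instantiated at $S=\Id$, $T$, $TT$; this is legitimate because the correspondence of Section~\ref{sec:forgetful} uses nothing about $T$ beyond functoriality, so it applies verbatim to $\Id$ and $TT$. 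Your pull-out identity $(\gamma\circ\sigma G)^\ddagger=\gamma^\ddagger\circ\sigma$ handles the unit laws and the left side of associativity, and your computation $(\alpha^\flat\circ T\alpha^\flat)^\ddagger=\alpha^\flat F\circ T\alpha^\dagger=G\epsilon F\circ\alpha^\dagger GF\circ T\alpha^\dagger$ is exactly right, and indeed is where the multiplication of $GF$ enters. What each approach buys: the paper's proof is fully self-contained elementary pasting, with the counit--unit equations appearing explicitly inside each chase; yours proves both directions at once (injectivity of $(-)^\ddagger$ turns each pair of corresponding equations into a biconditional), isolates reusable lemmas, and explains conceptually \emph{why} the two axiom sets are equivalent --- they are literally the same equations read through an isomorphism of sets of natural transformations. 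The cost is that the counit--unit bookkeeping is not eliminated but moved into the verification that $(-)^\ddagger$ is a bijection for each $S$, which the paper only asserts (for $T$) as ``easily seen''; a fully rigorous writeup should record that check once. Your transport-style argument is in fact closer in spirit to the general result of Dubuc that the paper cites as subsuming this lemma.
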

\begin{proof}
This is a special case of~\cite[Prop.~II.1.4]{dubuc}, but see Appendix~\ref{app:sec5} for a self-contained proof.
\end{proof}

This means that a forgetful logic for $T$ whose mate is a monad action, is equivalent to an encoding of the monad structure of $T$ (which is used for resolving branching in~\cite{powerturi,HJS07,JSS14,SBBR13}) into the monad $GF$. We shall use this connection in Section~\ref{sec:determinization} to relate forgetful logics to the determinization constructions of~\cite{JSS14}. 

It is easy to check by hand that in all examples in Section~\ref{sec:examples}, $\alpha^{\flat}$ is an action, but it also follows from the following considerations.

In some well-structured cases, one can search for a suitable $\alpha$ by looking at $T$-algebras in $\C$. We mention it only briefly and not explain the details, as it will not be directly used in the following.

If $\C$ has products, then for any object $V\in\C$ there is a contravariant adjunction as in Section~\ref{sec:contr-adj}, where:
$\D=\Set$, $F = \C(-,V)$ and $G = V^-$,
where $V^X$ denotes the $X$-fold product of $V$ in $\C$. (This adjunction was studied in~\cite{lenisapowerwatanabe} for the purpose of combining distributive laws.) By the Yoneda Lemma, natural transformations $\alpha\colon F\To FT$ are in bijective correspondence with algebras $g \colon TV\to V$. Routine calculation shows that the mate $\alpha^\flat$ is a $T$-action if and only if the corresponding $g$ is an Eilenberg-Moore algebra for $T$.
	
Alternatively, one may assume that $\C=\D$ is a symmetric monoidal closed category and $F=G=V^-$ is the internal hom-functor based on an object $V\in \C$. 
(This adjunction was studied in~\cite{klin07} in the context of coalgebraic modal logic.) If, additionally, the functor $T$ is strong, then every algebra $g \colon TV\to V$ gives rise to $\alpha \colon F\To FT$, whose components $\alpha_X  \colon  V^X \to V^{TX}$ are given by transposing:
\[\xymatrix{
	TX\otimes V^X\ar[rr]^{\textrm{strength}} & & T(X\otimes V^X)\ar[rr]^-{T(\textrm{application})} & & TV\ar[r]^g & V
}\]
If $T$ is a strong monad and $g$ is an Eilenberg-Moore algebra for $T$ then $\alpha^\flat$ is a $T$-action.

In an enriched setting, if $\D$ is enriched over $\C$, both these constructions are instances of a more general one based on the existence of suitable powers.	

If $\C=\D=\Set$ then both constructions apply (and coincide). 
In this situation more can be said~\cite{JSS14,Jacobs:coalg}: the resulting contravariant adjunction can be factored through the category of Eilenberg-Moore algebras for $T$. 

\section{Determinization}\label{sec:determinization}

The classical powerset construction turns a non-deterministic automaton into a deterministic one, with states of the former interpreted as singleton states in the latter.
More generally, a determinization procedure of coalgebras involves a change of state space.
We define it as follows.
\begin{definition}\label{def:det}
Let $T, H \colon \C \rightarrow \C$ be endofunctors.
A $(T)$-\emph{determinization procedure} of $H$-coalgebras consists of a natural transformation $\eta \colon \Id\To T$,
a functor $K \colon \C \rightarrow \C$ and a lifting of $T$:
$$
\xymatrix{
\coalg{H} \ar[d] \ar[r]^{\bar{T}} & \coalg{K} \ar[d] \\
\C \ar[r]_T & \C
}
$$
\end{definition}
We will mostly focus on cases where $H=TB$ or $H=BT$, but in Section~\ref{sec:isomates} we will consider situations where $T$ is not directly related to $H$.

The classical powerset construction is {\em correct}, in the sense that the language semantics of a state $x$ in a non-deterministic automaton coincides with the final semantics (the accepted language) of the singleton of $x$ in the determinized automaton.
At the coalgebraic level, we capture trace semantics by a forgetful logic. Then, a determinization procedure is correct if logical equivalence 
on the original system coincides with behavioural equivalence on the determinized system along $\eta$, formally captured
as follows.
\begin{definition}\label{def:correct} 
	A $T$-determinization procedure
	$(\bar{T},K,\eta)$ of $H$-coalgebras is \emph{correct} wrt.~a 
	logic for $H$ if, for any $H$-coalgebra $(X,f)$ with logical semantics $\tr$:
	\begin{enumerate}
		\item $\tr$ factors through $h \circ \eta_X$, for any $K$-coalgebra homomorphism $h$ from $\bar{T}(X,f)$. 
		\item there exists a $K$-coalgebra homomorphism $h$ from $\bar{T}(X,f)$ and a mono $m$ so that $\tr = m \circ h \circ \eta_X$.
	\end{enumerate}\medskip
\end{definition}

\noindent The first condition states that behavioural equivalence on the determinized system implies logical equivalence
on the original system; the second condition states the converse. 

It is standard to define a lifting $\bar{T} \colon \coalg{H} \rightarrow \coalg{K}$ of $T$ from a natural transformation
$\lambda \colon TH \Rightarrow KT$, as follows:
\begin{equation}\label{eq:lifting-from-transformation}
\bar{T}(X,f) = (TX, \lambda_X \circ Tf) \qquad \bar{T}(h) = Th
\end{equation}
for any $H$-coalgebra $(X,f)$ and coalgebra morphism $h$. 
All of the (liftings in) determinization constructions considered in this paper arise from such natural transformations.

In~\cite{JSS14} a more specific kind of determinization for $TB$-coalgebras was studied, arising
from a natural transformation $\kappa \colon TB \Rightarrow KT$ and a monad $(T, \eta, \mu)$. 
This construction is an instance of~\eqref{eq:lifting-from-transformation}, by setting $H = TB$ and 
$\lambda = \kappa \circ \mu B : TTB \Rightarrow KT$. We denote the lifting of $T$ arising in this way by $T^\kappa$.
Spelling out the details,
for any $TB$-coalgebra $(X,f)$ we have:
\begin{equation}\label{eq:det-tb}
  \xymatrix{T^\kappa(X,f) = (TX \ar[r]^-{Tf} & TTBX \ar[r]^-{\mu_{BX}} & TBX \ar[r]^-{\kappa_X} & KTX)}.
\end{equation}
For examples see, e.g.,~\cite{JSS14} and the end of this section.

The same type of natural transformation can be used to determinize $BT$-coalgebras. 
This is again an instance of~\eqref{eq:lifting-from-transformation}, where $H = BT$ and
$\lambda = K\mu \circ \kappa T : TBT \Rightarrow KT$. We denote the lifting of $T$ arising in this way
by $T_\kappa$. Spelling out the details, for any $BT$-coalgebra $(X,f)$ we have:
\begin{equation}\label{eq:det-bt}
  \xymatrix{T_\kappa(X,f) = (TX \ar[r]^-{Tf} & TBTX \ar[r]^-{\kappa_{TX}} & KTTX \ar[r]^-{K\mu_{X}} & KTX)}.
\end{equation}
This is considered in~\cite{SBBR13,JSS14} for the case where $B=K$ and 
$\kappa$ is a distributive law of monad over functor. 

In Theorem~\ref{thm:det-translate} below, we give a sufficient condition for the logical semantics on $TB$ or $BT$-coalgebras 
to coincide with a logical semantics on determinized $K$-coalgebras, for the determinization constructions $T^\kappa$ and
$T_\kappa$. The same condition was recently studied in~\cite[Lemma 5.11]{corina15}, where it (more precisely, its mate) was proved to hold, under some additional assumptions, if $\alpha$ arises from the algebra $\mu_1:TT1\to T1$ as described in Section~\ref{sec:monads}. First, we prove a general result, in the setting of~\eqref{eq:lifting-from-transformation}, which relates the logical semantics of an $H$-coalgebra to a logical semantics for the $K$-coalgebra obtained
by applying the lifting $\bar{T}$. 

\begin{lemma}\label{lm:log-sem-determinize}
	Assume natural transformations $\lambda \colon TH \Rightarrow KT$,
	$\delta \colon LF \Rightarrow FH$, $\theta \colon LF \Rightarrow FK$
	and $\alpha \colon F \Rightarrow FT$ such that
	$\lambda$ is a morphism of logics from $\theta \circledcirc \alpha$ to $\alpha \circledcirc \delta$.
	
	Let $s^\flat$ be the logical semantics of $\delta$ on a coalgebra
	$f \colon X \rightarrow HX$. Then $\alpha^\flat_\Phi \circ Ts^\flat$ is
	the logical semantics of $\theta$ on the coalgebra $(TX,\lambda_X \circ Tf)$.
\end{lemma}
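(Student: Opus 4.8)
The plan is to identify the asserted map as a logical semantics via the uniqueness property of the ``twisted coalgebra morphism'' diagram~\eqref{eq:logsem2}, and then to check that the candidate satisfies the defining equation by a short diagram chase fuelled by the morphism-of-logics hypothesis. Throughout, let $a\colon L\Phi\to\Phi$ be the initial $L$-algebra (both $\delta$ and $\theta$ have syntax $L$), and write $u := \alpha^\flat_\Phi\circ T\tr \colon TX\to G\Phi$ for the candidate.

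First I would record what the hypothesis says at the level of mates. Since both composite logics have syntax $L$, the component $\tau$ of the morphism is $\id_L$, so the mate form of a morphism of logics reduces to $(\alpha\circledcirc\delta)^\flat = (\theta\circledcirc\alpha)^\flat\circ\lambda G$. Using that mates respect the composition $\circledcirc$ and spelling out the two composites — which, by matching types against the definition of $\circledcirc$, are forced to be $(\alpha\circledcirc\delta)^\flat = \alpha^\flat L\circ T\delta^\flat$ and $(\theta\circledcirc\alpha)^\flat = \theta^\flat\circ K\alpha^\flat$ (the same shapes as in the diagrams of Examples~\ref{ex:nda} and~\ref{ex:alt-aut}) — the hypothesis becomes the single equation
\[
	\alpha^\flat L\circ T\delta^\flat \;=\; \theta^\flat\circ K\alpha^\flat\circ\lambda G \qquad(\star)
\]
of natural transformations $THG\To GL$. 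I would then read off its $\Phi$-component, $\alpha^\flat_{L\Phi}\circ T\delta^\flat_\Phi = \theta^\flat_\Phi\circ K\alpha^\flat_\Phi\circ\lambda_{G\Phi}$.

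By~\eqref{eq:logsem2} applied to $\theta$ on the $K$-coalgebra $\lambda_X\circ Tf$, the logical semantics of $\theta$ is the \emph{unique} map $u\colon TX\to G\Phi$ satisfying $\theta^\flat_\Phi\circ Ku\circ\lambda_X\circ Tf = Ga\circ u$; so it suffices to verify this equation for our $u$, and the claim follows by uniqueness. The chase rewrites the left-hand side as follows: expand $Ku = K\alpha^\flat_\Phi\circ KT\tr$ by functoriality of $K$; replace $KT\tr\circ\lambda_X$ by $\lambda_{G\Phi}\circ TH\tr$ by naturality of $\lambda$ at $\tr$; apply the $\Phi$-component of $(\star)$ to turn $\theta^\flat_\Phi\circ K\alpha^\flat_\Phi\circ\lambda_{G\Phi}$ into $\alpha^\flat_{L\Phi}\circ T\delta^\flat_\Phi$; collect the $T$-images by functoriality of $T$ into $\alpha^\flat_{L\Phi}\circ T(\delta^\flat_\Phi\circ H\tr\circ f)$; use the defining equation of $\tr$ from~\eqref{eq:logsem2} for $\delta$, namely $\delta^\flat_\Phi\circ H\tr\circ f = Ga\circ\tr$, to reach $\alpha^\flat_{L\Phi}\circ TGa\circ T\tr$; and finally invoke naturality of $\alpha^\flat\colon TG\To G$ at $a\colon L\Phi\to\Phi$, giving $\alpha^\flat_{L\Phi}\circ TGa = Ga\circ\alpha^\flat_\Phi$. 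This yields exactly $Ga\circ\alpha^\flat_\Phi\circ T\tr = Ga\circ u$, as required.

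The one genuinely delicate step is the bookkeeping in the second paragraph: correctly identifying the two composite mates and hence pinning down the precise form $(\star)$ of the hypothesis. The contravariance of $G$, and therefore of $\alpha^\flat$, must be handled with care — in particular when invoking naturality of $\alpha^\flat$ at $a$, where it is the direction $\alpha^\flat_{L\Phi}\circ TGa = Ga\circ\alpha^\flat_\Phi$ that is relevant, since $TG$ and $G$ are contravariant in the $\D$-variable. Everything else is a routine application of functoriality, naturality of $\lambda$, and the two instances of~\eqref{eq:logsem2}.
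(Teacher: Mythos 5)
Your proof is correct and is essentially the paper's own argument: the paper verifies the same defining equation from~\eqref{eq:logsem2} for the candidate $\alpha^\flat_\Phi\circ Ts^\flat$ via a single commuting diagram whose four cells are exactly your four steps (the defining square for $s^\flat$ under $T$, naturality of $\alpha^\flat$, the mate form of the morphism-of-logics hypothesis, and naturality of $\lambda$). The only difference is presentational — an equational chase with the hypothesis spelled out explicitly as $(\star)$, versus one pasted diagram — so nothing of substance separates the two.
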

\begin{proof}
	Consider the following diagram:
	$$
	\xymatrix@C=1.2cm{
		TX \ar[rr]^{Ts^\flat} \ar[d]_{Tf}
			& 
			& TG\Phi \ar[d]^{TGa} \ar[r]^{\alpha^\flat_\Phi}
			& G\Phi \ar[d]^{Ga} \\
		THX \ar[d]_{\lambda_X} \ar[r]^{THs^\flat} 
			& THG\Phi \ar[d]^{\lambda_{G\Phi}} \ar[r]^{T\delta^\flat_\Phi}
			& TGL\Phi \ar[r]^{\alpha^\flat_{L\Phi}}
			& GL\Phi \ar@{=}[d] \\
 		KTX \ar[r]_{KTs^\flat}
 			& KTG\Phi \ar[r]_{K\alpha^\flat_\Phi}
 			& KG\Phi \ar[r]_{\theta^\flat_\Phi}
 			& GL\Phi
	}
	$$
	which commutes, clockwise starting from the top left: by definition of the logical semantics $s^\flat$,
	naturality of $\alpha^\flat$, the assumption that $\lambda$ is a morphism of logics,
	and naturality of $\lambda$. Commutativity of the outside of the diagram
	means that $\alpha^\flat_\Phi \circ Ts^\flat$ is indeed the logical
	semantics of $\theta$ on $(TX,\lambda_X \circ Tf)$.
\end{proof}

\begin{theorem}\label{thm:det-translate}
Assume:
\begin{itemize}
	\item a monad $(T, \eta, \mu)$;
	\item a forgetful logic $\alpha \colon F \Rightarrow FT$, $\rho \colon LF \Rightarrow FB$, such
	that $\alpha^\flat$ is a $T$-action on $G$ (Definition~\ref{def:action});
	\item a natural transformation $\kappa \colon TB \Rightarrow KT$, and
	\item another logic $\theta \colon LF\To FK$,
\end{itemize}
such that $\kappa$ is a morphism of logics from $\theta \circledcirc \alpha$
to $\alpha \circledcirc \rho$, i.e., the following diagram commutes:
$$
\xymatrix{
	TBG \ar@{=>}[r]^{T\rho^\flat} \ar@{=>}[d]_{\kappa G}
		& TGL \ar@{=>}[r]^{\alpha^\flat L} 
		& GL \ar@{=}[d]\\
	KTG \ar@{=>}[r]_{K\alpha^\flat} & KG \ar@{=>}[r]_{\theta^\flat} 
	& GL.
}
$$
Let $\tr$ be the semantics of $\alpha \circledcirc \rho$ on a coalgebra $f \colon X \rightarrow TBX$,
and let $\trr{\theta}$ be the semantics of $\theta$ on $T^\kappa(X,f)$ (see~\eqref{eq:det-tb}). 
Then $\tr = \trr{\theta} \circ \eta_X$.

The same holds for the determinization procedure $T_\kappa$ (see~\eqref{eq:det-bt}) for $BT$-coalgebras and the logic $\rho \circledcirc \alpha$.
\end{theorem}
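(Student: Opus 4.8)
The statement asserts that for a $TB$-coalgebra $f\colon X\to TBX$, the forgetful-logic semantics $\tr$ (of $\alpha\circledcirc\rho$) equals $\trr{\theta}\circ\eta_X$, where $\trr{\theta}$ is the semantics of $\theta$ on the determinized coalgebra $T^\kappa(X,f)$. The plan is to reduce this to the already-proved Lemma~\ref{lm:log-sem-determinize} and then peel off the monad unit using the $T$-action axiom. The key bookkeeping step is to identify the data: take $H=TB$, $L$ replaced by the composite syntax $NL=L$ (recall $\alpha$ has trivial syntax $N=\Id$, so the logic $\alpha\circledcirc\rho$ has syntax functor $L$), and $\delta = \alpha\circledcirc\rho\colon LF\To FTB$. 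The determinization $T^\kappa$ fits the scheme~\eqref{eq:lifting-from-transformation} via $\lambda=\kappa\circ\mu B$, exactly as recorded just before the theorem. So I first verify the hypothesis of Lemma~\ref{lm:log-sem-determinize}, namely that this $\lambda$ is a morphism of logics from $\theta\circledcirc\alpha$ to $\alpha\circledcirc\delta = \alpha\circledcirc(\alpha\circledcirc\rho)$.

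First I would check that hypothesis by a mate-level diagram chase. Since taking mates respects $\circledcirc$, the morphism-of-logics condition for $\lambda$ is a commuting square involving $(\alpha\circledcirc\alpha\circledcirc\rho)^\flat = \alpha^\flat\circledcirc\alpha^\flat\circledcirc\rho^\flat$ on one side and $(\theta\circledcirc\alpha)^\flat=\theta^\flat\circledcirc\alpha^\flat$ on the other, with $\lambda^\flat = (\kappa\circ\mu B)^\flat$ along the edges. The assumed square (that $\kappa$ is a morphism from $\theta\circledcirc\alpha$ to $\alpha\circledcirc\rho$) supplies the ``$\kappa$ part''; the remaining ``$\mu B$ part'' is precisely where the $T$-action hypothesis enters. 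Concretely, the mate of $\mu B$ contributes a factor $\alpha^\flat\circ T\alpha^\flat$ which, by the action axiom (Definition~\ref{def:action}), equals $\alpha^\flat\circ\mu G$ composed appropriately; so the two applications of $\alpha^\flat$ coming from the doubled $\alpha$ collapse correctly against $\mu$. I expect this to be the main obstacle: getting the mate of the composite $\lambda=\kappa\circ\mu B$ into the right form and seeing that the action square is exactly what is needed to match the two $\alpha^\flat$'s against the single $\alpha^\flat$ in $\delta$. This is essentially a commuting-pentagon argument built from the action diagram plus the given morphism square plus naturality.

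Once the hypothesis is established, Lemma~\ref{lm:log-sem-determinize} immediately gives that $\alpha^\flat_\Phi\circ Ts^\flat$ is the logical semantics $\trr{\theta}$ of $\theta$ on $T^\kappa(X,f)$, where $s^\flat=\tr$ is the semantics of $\delta=\alpha\circledcirc\rho$ on $f$. It then remains to precompose with $\eta_X$ and show $\trr{\theta}\circ\eta_X = \tr$, i.e.\ that $\alpha^\flat_\Phi\circ Ts^\flat\circ\eta_X = s^\flat$. By naturality of $\eta\colon\Id\To T$ we have $Ts^\flat\circ\eta_X=\eta_{G\Phi}\circ s^\flat$, and then the unit law of the action $\alpha^\flat\circ\eta G=\id$ (the right-hand triangle of Definition~\ref{def:action}) yields $\alpha^\flat_\Phi\circ\eta_{G\Phi}\circ s^\flat = s^\flat$, as required. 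This final step is a short two-square chase using only naturality of $\eta$ and the unit axiom of the $T$-action.

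Finally, the $BT$ case for $\rho\circledcirc\alpha$ is entirely dual: one takes $H=BT$, $\lambda = K\mu\circ\kappa T$ as in~\eqref{eq:det-bt}, instantiates Lemma~\ref{lm:log-sem-determinize} with $\delta=\rho\circledcirc\alpha$ and the same $\theta,\alpha$, and runs the identical two arguments. The morphism-of-logics hypothesis is again assembled from the given $\kappa$-square and the action axiom (now the action interacts with the $\mu$ applied on the other side), and the unit law disposes of $\eta_X$ in the same way. I would simply remark that the proof is symmetric rather than repeat it.
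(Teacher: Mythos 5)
Your proposal is correct and follows essentially the same route as the paper's proof: instantiate Lemma~\ref{lm:log-sem-determinize} with $H=TB$, $\delta=\alpha\circledcirc\rho$, $\lambda=\kappa\circ\mu B$, verify that $\lambda$ is a morphism of logics from $\theta\circledcirc\alpha$ to $\alpha\circledcirc\alpha\circledcirc\rho$ by combining the assumed $\kappa$-square with the action axiom (which makes $\mu$ a morphism from $\alpha$ to $\alpha\circledcirc\alpha$) and naturality, and then discharge $\eta_X$ via naturality of $\eta$ and the unit law of the action; the $BT$ case with $\lambda=K\mu\circ\kappa T$ is handled the same way, exactly as in the paper.
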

\begin{proof}
	We first consider $TB$-coalgebras. Our aim is to use Lemma~\ref{lm:log-sem-determinize},
	instantiated to $H=TB$, $\delta = \alpha \circledcirc \rho$ and $\lambda = \kappa \circ \mu B$.
	To this end, we must show that $\lambda$ is a morphism of logics from $\theta \circledcirc
	\alpha$ to $\alpha \circledcirc \alpha \circledcirc \rho$, i.e., the outside
	of the following diagram should commute:
	$$
	\xymatrix{
		TTBG \ar@{=>}[r]^{TT\rho^\flat} \ar@{=>}[d]_-{\mu BG}
			& TTGL \ar@{=>}[d]^-{\mu{GL}} \ar@{=>}[r]^{T\alpha^\flat L}
			& TGL \ar@{=>}[d]^-{\alpha^\flat L} \\
		TBG \ar@{=>}[d]_{\kappa G} \ar@{=>}[r]^{T\rho^\flat} 
			& TGL \ar@{=>}[r]^{\alpha^\flat L}
			& GL \ar@{=}[d] \\
		KTG \ar@{=>}[r]_{K\alpha^\flat} 
			& KG \ar@{=>}[r]_{\theta^\flat} 
			& GL
	}
	$$
	Indeed, the diagram commutes, clockwise starting from the top left: by naturality of $\mu$,
	by the assumption that $\alpha^\flat$ is a $T$-action, and by the assumption that $\kappa$
	is a morphism of logics from $\theta \circledcirc \alpha$ to $\alpha \circledcirc \rho$
	(in fact, since $\alpha^\flat$ is a $T$-action, $\mu$ is a morphism of logics from $\alpha$ to $\alpha \circledcirc \alpha$,
	hence $\mu B$ is a morphism from $\alpha \circledcirc \rho$ to $\alpha \circledcirc \alpha \circledcirc \rho$, and the above diagram is obtained by composing this morphism with $\kappa$).

	By Lemma~\ref{lm:log-sem-determinize}, for any $TB$-coalgebra $(X,f)$
	we obtain 
	$$
		s^\flat_{\theta} = \alpha_\Phi^\flat \circ Ts^\flat
	$$ 
	where $s^\flat$ is the logical semantics of $\alpha \circledcirc \rho$
	on $(X,f)$ and $s^\flat_{\theta}$ 
	is the logical semantics of $\theta$ on $T^\kappa(X,f) = (TX, \kappa_X \circ \mu_{BX} \circ Tf)$.
	Hence
	$	s^\flat_{\theta} \circ \eta_X 
			= \alpha^\flat_{\Phi} \circ Ts^\flat \circ \eta_X 
			= \alpha^\flat_\Phi \circ \eta_{G\Phi} \circ s^\flat 
			= s^\flat 
	$
	where the second equality holds by naturality of $\eta$,
	and the third since $\alpha^\flat$ is a $T$-action.
	
	For $BT$-coalgebras, we instantiate Lemma~\ref{lm:log-sem-determinize} to
	$H=BT$, $\delta = \rho \circledcirc \alpha$ and $\lambda = K\mu \circ \kappa T$. We prove
	that $\lambda$ is a morphism of logics:
	$$
	\xymatrix{
		TBTG \ar@{=>}[r]^{TB\alpha^\flat} \ar@{=>}[d]_{\kappa TG}
			& TBG\ar@{=>}[d]^{\kappa G} \ar@{=>}[r]^{T\rho^\flat}
			& TGL \ar@{=>}[dd]^{\alpha^\flat L} \\
		KTTG \ar@{=>}[d]_{K \mu G} \ar@{=>}[r]^{K T \alpha^\flat}
			& KTG \ar@{=>}[d]^{K\alpha^\flat}
			& \\
		KTG \ar@{=>}[r]_{K \alpha^\flat}
			& KG \ar@{=>}[r]_{\theta^\flat}
			& GL
	}
	$$
	The diagram commutes, clockwise starting from the top left: by naturality of $\kappa$, the assumption that
	by $\kappa$ is a morphism of logics and by the assumption that $\alpha^\flat$ is a $T$-action
	(again, the above diagram amounts to a composition of the logic morphisms $\mu$ and $\kappa$).
	By Lemma~\ref{lm:log-sem-determinize}, for any $BT$-coalgebra $(X,f)$
	we obtain 
	$
		s^\flat_{\theta} = \alpha_\Phi^\flat \circ Ts^\flat
	$ 
	where $s^\flat$ is the logical semantics of $\rho \circledcirc \alpha$
	on $(X,f)$ and $s^\flat_{\theta}$ 
	is the logical semantics of $\theta$ on $T_\kappa(X,f) = (TX, K\mu_X \circ \kappa_{TX} \circ Tf)$.
	The conclusion of the proof is analogous to the above case of $TB$-coalgebras.
\end{proof}

We apply Theorem~\ref{thm:det-translate} to obtain a sufficient condition for a determinization construction
to be correct with respect to a trace semantics given by a forgetful logic. The main 
idea is to choose $\theta$ to be an expressive logic for $K$, so that logical equivalence
coincides with behavioural equivalence.
\begin{corollary}\label{cor:correctness}
	Let $(T, \eta, \mu)$, $\alpha$, $\rho$, $\theta$ and $\kappa$ be as in Theorem~\ref{thm:det-translate}, and 
	suppose that $\theta$ is an expressive logic. Then 
	the determinization procedure $T^\kappa$ of $TB$-coalgebras \eqref{eq:det-tb} is correct with respect to $\alpha \circledcirc \rho$,
	and the determinization procedure $T_\kappa$ of $BT$-coalgebras \eqref{eq:det-bt} is correct with respect to $\rho \circledcirc \alpha$.
\end{corollary}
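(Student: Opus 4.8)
The plan is to derive Corollary~\ref{cor:correctness} directly from Theorem~\ref{thm:det-translate} by combining the equation $\tr = \trr{\theta} \circ \eta_X$ with the assumption that $\theta$ is expressive. The theorem already does the heavy lifting: it establishes that the forgetful trace semantics $\tr$ on the original coalgebra factors as the logical semantics $\trr{\theta}$ of $\theta$ on the determinized coalgebra, precomposed with the unit $\eta_X$. So the only work remaining is to translate expressiveness of $\theta$ into the two conditions of Definition~\ref{def:correct}, namely that behavioural equivalence on the determinized system matches logical equivalence on the original.

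First I would recall what expressiveness buys us. By the discussion following~\eqref{eq:liftF}, $\theta$ being expressive means that for any $K$-coalgebra, its logical semantics $\trr{\theta}$ decomposes as a coalgebra homomorphism $h$ followed by a mono $m$, i.e.\ $\trr{\theta} = m \circ h$. Moreover, the general fact stated there is that logical semantics always factors through coalgebra homomorphisms (behavioural equivalence implies logical equivalence). Applying this to the determinized coalgebra $T^\kappa(X,f)$ (respectively $T_\kappa(X,f)$), I obtain a $K$-coalgebra homomorphism $h$ out of $\bar{T}(X,f)$ and a mono $m$ with $\trr{\theta} = m \circ h$.

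For condition~(2) of correctness, I would simply substitute this factorization into the theorem's conclusion: $\tr = \trr{\theta} \circ \eta_X = m \circ h \circ \eta_X$, which exhibits exactly the required homomorphism $h$ and mono $m$ witnessing that logical equivalence on the original system implies behavioural equivalence on the determinized one. For condition~(1), I would take \emph{any} $K$-coalgebra homomorphism $h'$ out of $\bar{T}(X,f)$; since logical semantics factors through coalgebra homomorphisms, $\trr{\theta}$ factors through $h'$, say $\trr{\theta} = g \circ h'$ for some $g$. Combined with the theorem, $\tr = \trr{\theta} \circ \eta_X = g \circ (h' \circ \eta_X)$, so $\tr$ factors through $h' \circ \eta_X$ as required.

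I do not anticipate a genuine obstacle here, since the corollary is essentially a repackaging of the theorem through the definition of expressiveness. The one point requiring mild care is matching the precise phrasing of Definition~\ref{def:correct} — in particular, condition~(1) quantifies over \emph{all} homomorphisms $h$ from $\bar{T}(X,f)$ while condition~(2) asserts the \emph{existence} of one with a mono — and making sure the expressiveness factorization is invoked in the right direction for each. The two cases ($T^\kappa$ for $TB$-coalgebras with logic $\alpha \circledcirc \rho$, and $T_\kappa$ for $BT$-coalgebras with logic $\rho \circledcirc \alpha$) are handled identically, since Theorem~\ref{thm:det-translate} already supplies both factorizations $\tr = \trr{\theta} \circ \eta_X$.
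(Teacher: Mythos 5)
Your proposal is correct and follows essentially the same route as the paper's own proof: apply Theorem~\ref{thm:det-translate} to get $\tr = \trr{\theta} \circ \eta_X$, then use the fact that logical semantics factors through arbitrary coalgebra homomorphisms for condition~(1) and the expressiveness factorization (homomorphism followed by a mono) for condition~(2). The extra care you take in spelling out the factorization $\trr{\theta} = g \circ h'$ for condition~(1) is exactly what the paper's terser phrasing implicitly relies on.
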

\begin{proof}
	Let $\bar{T}$ be either $T^\kappa$ or $T_\kappa$, 
	let $(X,f)$ be a $TB$-coalgebra or a $BT$-coalgebra respectively,
	and $\tr$ the semantics of the forgetful logic on $f$. Under the above assumptions, by Theorem~\ref{thm:det-translate}
	we have $\tr = \trr{\theta} \circ \eta_X$, where $\trr{\theta}$ is the logical semantics 
	of $\theta$ on $\bar{T}(X,f)$.
	Since $\trr{\theta}$ is a logical semantics it factors through any coalgebra homomorphism,
	yielding condition (1) of correctness,
	and since it is expressive it decomposes as a coalgebra homomorphism followed by a mono,
	yielding condition (2).
\end{proof}

To illustrate all this, we show that the determinization 
of weighted automata as given in~\cite{JSS14}
is correct with respect to weighted language equivalence. (There is no such result for tree automata, as they do not determinize.)
\begin{example}\label{ex:wa-correct}
	Fix a semiring $\sem$, let $B = A \times - +1$ and $K = \sem \times -^A$.
	Consider $\kappa \colon \M B \Rightarrow K \M$ defined as follows~\cite{JSS14}:
	$
		\kappa_X(\varphi) = (\varphi(*), \lambda a. \lambda x. \varphi(a,x))
	$.
	This induces a determinization procedure $\M^\kappa$ as in~\eqref{eq:det-tb}, for weighted automata.
	Let $\alpha \circledcirc \rho$ be the forgetful logic for weighted automata introduced in 
	Example~\ref{ex:wta}, and recall that the logical semantics on a weighted automaton
	is the usual notion of acceptance of weighted languages.
	We use Corollary~\ref{cor:correctness} to prove that the determinization
	procedure $\M^\kappa$ is correct with respect to $\alpha \circledcirc \rho$.
	To this end, consider the logic 
	$\theta^\flat \colon \sem \times (\sem^-)^A \Rightarrow \sem^{A \times - + 1}$ given by the isomorphism,
	similar to the logic in Example~\ref{ex:nda-bt}.
	Since $\theta^\flat$ is componentwise injective, $\theta$ is expressive. 
	Moreover, $\alpha^\flat$ is an action (see Section~\ref{sec:monads}).
	The only remaining condition is commutativity of the diagram in Theorem~\ref{thm:det-translate}, which is a straightforward calculation.
	This proves correctness of the determinization $\M^\kappa$ with respect to the semantics of $\alpha \circledcirc \rho$.
\end{example}
\begin{example}\label{ex:nda-correct}
	In~\cite{SBBR13} it is shown how to
	determinize non-deterministic automata of the form $B\powf$, 
	where $BX = 2 \times X^A$, based on $\kappa = \langle \kappa^o, \kappa^t \rangle \colon \powf(2 \times -^A) \Rightarrow 2 \times (\powf-)^A$ 
	(note that $B=K$ in this example)
	where 
		$\kappa^o_X(S) = \true$ iff $\exists t. (\true, t) \in S$, and $\kappa^t_X(a) = \{x \mid x \in t(a)$ for some $(o,t) \in S\}$.
  In Example~\ref{ex:nda-bt} we have seen an expressive logic $\rho$ and an $\alpha$ so that the logical semantics of $\rho \circledcirc \alpha$ yields 
	the usual language semantics. It is now straightforward to check that the determinization $\kappa$ together with the logics $\rho$, $\alpha$ 
	above satisfies the condition of Theorem~\ref{thm:det-translate}, where $\theta = \rho$. 
	By Corollary~\ref{cor:correctness} this shows the expected result that determinization of non-deterministic automata is correct with respect to language semantics.
	
  Moreover, recall that the logic $\rho \circledcirc \beta$, where $\beta$ is as defined in Example~\ref{ex:nda-bt}, yields a conjunctive semantics. 
	Take the natural transformation
	$\tau = \langle \tau^o, \tau^t \rangle$ of the same type as $\kappa$, where $\tau^o(S) = \true$ iff $o=\true$ for every $(o,t) \in S$, and $\tau^t = \kappa^t$.
	Using Corollary~\ref{cor:correctness} we can verify that this determinization procedure is correct.	
\end{example}
One can also get the finite trace semantics of transition systems (Example~\ref{ex:nda-bt}) by turning them into non-deterministic automata (then, $B$ and $K$
are different).


\begin{example}\label{ex:alt-correct}
	Alternating automata (see Example~\ref{ex:alt-aut}) can be determinized into non-deterministic automata in a process that preserves the language semantics. We shall now see how this arises as an application of Theorem~\ref{thm:det-translate} without a reference to final semantics of coalgebras.
	
	This problem is more subtle than it may look, and our solution from a previous version of this paper~\cite{KlinR15} suffered from a serious mistake. We therefore present a corrected solution a little more elaborately.
	
	Let $BX = 2 \times X^A$ as before. We wish to determinize $B\powf\powf$-coalgebras where, according to Example~\ref{ex:alt-aut}, the outer $\powf$ is interpreted disjunctively (as in nondeterministic automata), and the inner $\powf$ is interpreted conjunctively (as in co-nondeterministic automata). The result should be a nondeterministic automaton, i.e., a $B\powf$-coalgebra with the $\powf$ interpreted disjunctively. If the original alternating automaton had a carrier $X$, then the nondeterministic automaton should have a carrier $\powf X$, interpreted conjunctively. 
	
To model all this, we should instantiate Theorem~\ref{thm:det-translate} so that $T=\powf$, the functors $B$ and $K$ from the theorem are both $B\powf = 2 \times (\powf-)^A$, the logic $\alpha$ from the theorem is instantiated to $\beta$ from Example~\ref{ex:alt-aut} and the logics $\rho$ and $\theta$ from the theorem are both instantiated to $\rho \circledcirc \alpha$ from Example~\ref{ex:alt-aut}. 

We then need to provide an appropriate natural transformation $\kappa \colon \powf B\powf\To B\powf\powf$ that would model the intended determinization procedure. It is natural to define it as a composition of two steps:
\[
  \kappa = B\chi \circ \tau\powf : \powf B \powf \Rightarrow B \powf \powf
\]
where $\tau \colon \powf B\To B \powf$ is defined as at the end of Example~\ref{ex:nda-correct}. As argued there, this distributes $B$ over $\powf$ according to the conjunctive interpretation modeled by the logic $\beta$.
	
The question remains how to define $\chi \colon  \powf \powf \Rightarrow \powf \powf$ so that it distributes the ``conjunctive'' powerset over the ``disjunctive'' one as intended. One may attempt (as we did in~\cite{KlinR15}):
\begin{equation}\label{eq:wrongChi}
	\chi_X(S) = \{ \overrightarrow{g}(S) \mid g\colon S\to X \mbox{ s.t. } g(U)\in U \mbox{ for each } U\in S\}
\end{equation}
	which, given a family of sets $S$, returns all possible sets obtained by choosing a single element from each set in $S$.
	
	Unfortunately this is {\em not} a natural transformation. The following counterexample is due to Joost Winter: consider
\begin{gather*} 
X=\{a,b,c\} \qquad Y=\{d,e\} \\
f \colon X\to Y \qquad f(a)=f(b)=d \qquad f(c)=e
\end{gather*}
and calculate:
\begin{align*}
(\powf\powf f)(\chi_X(\{\{a,c\},\{b,c\}\})) &= (\powf\powf f)(\{\{a,b\},\{a,c\},\{b,c\},\{c\}\}) = \{\{d\},\{d,e\},\{e\}\} \\
\chi_Y((\powf\powf f)(\{\{a,c\},\{b,c\}\})) &= \chi_Y(\{\{d,e\}\}) = \{\{d\},\{e\}\}
\end{align*}
therefore the naturality square for $\chi$ on $f \colon X\to Y$ does not commute for the argument 
$\{\{a,c\},\{b,c\}\}\in\powf\powf X$.

This mistake has occured in the literature before. In~\cite[pp. 183-184]{manes-mulry},~\eqref{eq:wrongChi} was claimed to be a distributive law of the powerset monad over itself, and a monad structure on the double (covariant) powerset was derived from that claim in the standard way. The above counterexample applies there, and the structure defined in~\cite{manes-mulry} is not, in fact, a monad. The mistake can be traced back to~\cite[pp. 76-79]{manes}, where the same purported monad structure is defined via a Kleisli triple which, in fact, fails to satisfy one of the axioms of Kleisli triples.

To avoid these problems, we choose another definition of $\chi$: given a finite family $S$ of finite sets, it returns all sets that:
\begin{itemize}
\item are contained in the union of $S$ (and are therefore finite, and there are finitely many of them), and
\item intersect with every set in $S$.
\end{itemize}
Formally:
\begin{equation}\label{eq:goodChi}
	\chi_X(S) = \left\{V\subseteq \bigcup S \mid V\cap U\neq\emptyset \mbox{ for each }U\in S\right\}.
\end{equation}
This turns out to be a natural transformation (see Appendix~\ref{app:sec6}). As a side remark, note that $\chi$ is {\em not} a distributive law of the monad $\powf$ over itself; it does not even satisfy the unit axioms.

	Now the composition $\kappa=B\chi \circ \tau\powf$ yields a determinization procedure. To show that the diagram in Theorem~\ref{thm:det-translate} commutes, a key ingredient is the fact that $\chi$ correctly distributes conjunction over disjunction or, more formally, that the diagram
\begin{equation}\label{eq:chialphabeta}
\vcenter{\xymatrix{
	\powf\powf G\ar@{=>}[r]^-{\powf\alpha^\flat}\ar@{=>}[d]_-{\chi G} & \powf G\ar@{=>}[rd]^-{\beta^\flat} \\
	\powf\powf G\ar@{=>}[r]_{\powf\beta^\flat} & \powf G\ar@{=>}[r]_{\alpha^\flat} & G
}}
\end{equation}
commutes, where $\alpha$ and $\beta$ are as in Example~\ref{ex:alt-aut}. (See Appendix~\ref{app:sec6} for details.)

	By Theorem~\ref{thm:det-translate} we conclude that for any alternating automaton: $\tr = \trr{\rho \circledcirc \alpha} \circ \eta_X$
	where $X$ is the set of states, $\tr$ is the language semantics of the alternating automaton as in Example~\ref{ex:alt-aut}, and $\trr{\rho \circledcirc \alpha}$ is the usual language semantics (as in Example~\ref{ex:nda-bt}) of the 
	non-deterministic automaton obtained by determinization.	
	
The determinization procedure formalized by our $\chi$ (and therefore $\kappa$) is slightly different from the standard procedure for transforming alternating automata into nondeterministic ones. It still returns an automaton where states are sets of states of the original alternating automaton, but the family of successors of each such state is larger than in the standard procedure (in particular, it is closed under taking set-theoretic unions). This does not change the language semantics (additional successor states, being supersets of some successor states in the standard nondeterministic automaton, do not accept any additional words), but the reachable part of the nondeterministic automaton may be larger than the one obtained by the standard procedure. As a result, although our determinization is correct, it may be less efficient than the standard one. This is the price we paid to salvage the naturality of $\chi$. We do not know how to model precisely the standard transformation of alternating automata into nondeterministic ones in our framework.
	
\end{example}

\section{Logics whose mates are isomorphisms}\label{sec:isomates}

Corollary~\ref{cor:correctness} provides a sufficient condition for a given determinization procedure to be correct with respect to a forgetful logic. However, in general there is no guarantee that a correct determinization procedure for a given logic exists. Indeed it would be quite surprising if it did: the language semantics of (weighted) tree automata (see Example~\ref{ex:wta}) is an example of a forgetful logic, and such automata are well known not to determinize in a classical setting.

In this section we provide a sufficient condition for a correct determinization procedure to exist. 
Specifically, for an endofunctor $B$, 
we assume a logic $\rho$ whose mate 
$\rho^{\flat} \colon  BG \To GL$ is a natural isomorphism. 
This condition holds, for instance, for $\rho$ in Example~\ref{ex:nda-bt} and for $\theta$ in Example~\ref{ex:wa-correct}.
%
%
It has been studied before in the context of determinization constructions~\cite{JSS14}. Its important 
consequence is that the logical semantics $\tr$ in~\eqref{eq:logsem2} from Section~\ref{sec:forgetful} can be seen as a $B$-coalgebra morphism: 
\begin{equation}\label{eq:logsem3}
\vcenter{\xymatrix{
	BX\ar[r]^{Bs^{\flat}} & BG\Phi \\
	& GL\Phi\ar[u]_{(\rho^{\flat}_{\Phi})^{-1}} \\
	X\ar[uu]^h\ar[r]_{s^{\flat}} & G\Phi\ar[u]_{Ga}
}}
\end{equation}
Moreover, as shown in~\cite[Lemma 6]{JSS14} (see also~\cite{Hermida97structuralinduction}),
the construction mapping 
\[
	\xymatrix{LA \ar[r]^-g & A} \qquad \text{ to } \qquad \xymatrix{GA \ar[r]^-{Gg} & GLA \ar[r]^-{(\rho^{\flat}_A)^{-1}} & BGA}
\]
defines a functor $\hat{G}\colon \alg{L}\to \coalg{B}$, which is a contravariant adjoint to $\hat{F}$ (see~\eqref{eq:liftF} in Section~\ref{sec:forgetful}). As a result, $\hat{G}$ maps initial objects to final ones, hence $\hat{G}(\Phi,a) = (G\Phi, (\rho_{\Phi}^\flat)^{-1}\circ Ga)$ is a final $B$-coalgebra. Therefore,
$\tr$ is a final coalgebra morphism from $(X,h)$.





\subsection{Canonical determinization}\label{sec:can-det}

The setting of a forgetful logic $\alpha, \rho$ where the mate of $\rho$ is a 
natural isomorphism gives rise to the following diagrams: 
\begin{equation}
\begin{gathered}\label{eq:lifting-situation}
\xymatrix{
	\coalg{TB} \ar[r]^-{\tilde{F}} \ar[d] 
	& \alg{L} \ar@/^5px/[r]^{\hat{G}} \ar[d] 
	& \coalg{B} \ar[d] \ar@/^5px/[l]^{\hat{F}} \\
	\C \ar[r]_F
	& \D \ar@/^5px/[r]^G
	& \C \ar@/^5px/[l]^F
}
\qquad 
\xymatrix{
	\coalg{BT} \ar[r]^-{\tilde{F}'} \ar[d] 
	& \alg{L} \ar@/^5px/[r]^{\hat{G}} \ar[d] 
	& \coalg{B} \ar[d] \ar@/^5px/[l]^{\hat{F}} \\
	\C \ar[r]_F
	& \D \ar@/^5px/[r]^G
	& \C \ar@/^5px/[l]^F
}
\end{gathered}
\end{equation}
The functor $\tilde{F}$ arises from the logic $\alpha \circledcirc \rho$,
the functor $\tilde{F}'$ arises from $\rho \circledcirc \alpha$,
the functor $\hat{F}$ arises from $\rho$ and its contravariant adjoint $\hat{G}$ from
the fact that $\rho^\flat$ is iso. Note that we make no assumptions on $\alpha$; in particular, $\alpha^\flat$ need not be an action.

The composition $\hat{G} \tilde{F}$ is a determinization procedure, turning a coalgebra $f \colon X \rightarrow TBX$ into
a $B$-coalgebra with carrier $GFX$. Explicitly, $\hat{G} \tilde{F}(X,f)$ is
\begin{equation}\label{eq:hatGtildeF}
\xymatrix@C=1.2cm{
	GFX \ar[r]^-{GFf} 
	& GFTBX \ar[r]^-{G \alpha_{BX}}
	& GFBX \ar[r]^-{G \rho_X}
	& GLFX \ar[r]^-{(\rho^\flat)^{-1}_{FX}}
	& BGFX \,.
}
\end{equation}
Similarly, the composition $\hat{G} \tilde{F}'$ is a determinization procedure, mapping a coalgebra $g \colon X \rightarrow BTX$ to
\begin{equation}\label{eq:hatGtildeFprime}
\xymatrix@C=1.2cm{
	GFX \ar[r]^-{GFg}
	& GFBTX \ar[r]^-{G\rho_{TX}}
	& GLFTX \ar[r]^-{GL\alpha_{X}}
	& GLFX \ar[r]^-{(\rho^\flat)^{-1}_{FX}}
	& BGFX \,.
}
\end{equation}
These determinization procedures are correct with respect to the logics
$\alpha\circledcirc\rho$ and $\rho \circledcirc \alpha$ respectively, in the following sense, much stronger than required by Definition~\ref{def:correct}.
\begin{theorem}\label{thm:detcorr}
 For any $TB$-coalgebra $(X,f)$, the logical semantics $\tr$ of $\alpha \circledcirc \rho$ on $(X,f)$ coincides with the final semantics of the $B$-coalgebra $\hat{G} \tilde{F}(X,f)$ precomposed with the unit $\iota \colon \Id \Rightarrow GF$. The same holds for $BT$-coalgebras and $\rho \circledcirc \alpha$.
\end{theorem}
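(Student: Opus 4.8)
The plan is to reduce the statement to functoriality of the contravariant adjoint $\hat{G}$, together with the fact recalled in Section~\ref{sec:isomates} that $\hat{G}$ carries the initial $L$-algebra to the final $B$-coalgebra. The central observation is that the logical semantics $\tr$ is \emph{by definition} the transpose of an initiality map, and that transposition along the basic adjunction is precisely the operation of applying $G$ and precomposing with the unit $\iota$. Once this is made explicit, the coincidence of $\tr$ with the final semantics precomposed with $\iota_X$ becomes almost immediate.

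First I would unfold the definition of $\tr$. Instantiating~\eqref{eq:logsem1} with behaviour functor $TB$ and logic $\alpha \circledcirc \rho$, the map $s \colon \Phi \to FX$ is the unique $L$-algebra homomorphism from the initial algebra $(\Phi, a)$ to the $L$-algebra obtained by the lifting~\eqref{eq:liftF}, namely $\tilde{F}(X,f) = (FX,\ Ff \circ (\alpha \circledcirc \rho)_X)$. The logical semantics is then its transpose, $\tr = s^{\flat} = Gs \circ \iota_X$, where the last equality is just the formula for transposing along the contravariant adjunction (inverse to $t \mapsto Ft \circ \epsilon$).

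Next I would apply the contravariant functor $\hat{G} \colon \alg{L} \to \coalg{B}$ to $s$. Since $\hat{G}$ acts as $G$ on carriers, $\hat{G}s = Gs \colon GFX \to G\Phi$ is a $B$-coalgebra homomorphism from $\hat{G}\tilde{F}(X,f)$ --- whose structure is the map~\eqref{eq:hatGtildeF} --- to $\hat{G}(\Phi, a)$. By Section~\ref{sec:isomates}, $\hat{G}(\Phi, a)$ is the final $B$-coalgebra, so every $B$-coalgebra homomorphism into it coincides with the final semantics; in particular the final semantics of $\hat{G}\tilde{F}(X,f)$ equals $Gs$. Precomposing with the unit then yields $Gs \circ \iota_X = s^{\flat} = \tr$, which is exactly the claim. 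For $BT$-coalgebras the argument is identical, with $\tilde{F}'$ built from $\rho \circledcirc \alpha$ in place of $\tilde{F}$, and the structure~\eqref{eq:hatGtildeFprime} in place of~\eqref{eq:hatGtildeF}.

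I do not anticipate any genuine computational obstacle; the entire proof is bookkeeping with definitions. The two points that require care are verifying that the codomain of $s$ in~\eqref{eq:logsem1} is literally the algebra $\tilde{F}(X,f)$, and, dually, that $\hat{G}s$ really is \emph{the} final coalgebra morphism rather than merely some morphism. Both follow from the contravariant adjunction $\hat{F} \dashv \hat{G}$: its right adjoint $\hat{G}$ sends the initial object of $\alg{L}$ to the final object of $\coalg{B}$, and uniqueness of maps into a final object supplies the rest.
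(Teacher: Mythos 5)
Your proposal is correct and takes essentially the same route as the paper's own proof: identify $\tr$ with the transpose $Gs \circ \iota_X$ of the initiality morphism $s \colon (\Phi,a) \to \tilde{F}(X,f)$, apply the contravariant adjoint $\hat{G}$ to $s$ to obtain a $B$-coalgebra morphism $Gs \colon \hat{G}\tilde{F}(X,f) \to \hat{G}(\Phi,a)$, and conclude by the fact that $\hat{G}$ carries the initial $L$-algebra to the final $B$-coalgebra, so that $Gs$ is the final semantics. There is nothing to correct.
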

\begin{proof}
Let $(X,f)$ be a $TB$-coalgebra (the case of $BT$-coalgebras is analogous).
The logical semantics $\tr$ is defined by $\tr = Gs \circ \iota_X$, where $s \colon (L,a) \rightarrow \tilde{F}(X,f)$ is the unique
algebra morphism arising by initiality. Hence, we only need to show that
$Gs$ is the final semantics of $\hat{G} \tilde{F}(X,f)$. To see this, apply $\hat{G}$ to $s$ to get a coalgebra morphism
$$
Gs \colon \hat{G} \tilde{F}(X,f) \rightarrow \hat{G}(L,a) \, .
$$
But $\hat{G}(L,a)$ is a final coalgebra, so we are done.
\end{proof}

\subsection{Determinization after preprocessing}\label{sec:det-preproc}

Strictly speaking, the above constructions $\hat{G} \tilde{F}$ and $\hat{G} \tilde{F}'$ are not examples of 
determinization procedures as understood in~\cite{JSS14}: the functors $\hat{G}\tilde{F}$ and $\hat{G}\tilde{F'}$ lift $GF$ rather than $T$, and the liftings do 
not arise from a natural transformation as described in Section~\ref{sec:determinization}. However, they are {\em almost} examples: 
after an encoding of $TB$-coalgebras as $GFB$-coalgebras and of $BT$-coalgebras as $BGF$-coalgebras, 
they arise from a distributive law $\kappa \colon GFB\To BGF$.

Indeed, recall from Section~\ref{sec:forgetful} that $\alpha \colon F\To FT$ uniquely determines a natural transformation $\alpha^\dagger \colon T\To GF$.
Furthermore, this induces functors $\coalg{BT} \xrightarrow{\coalgfun{B\alpha^\dagger}} \coalg{BGF}$
and  $\coalg{TB} \xrightarrow{\coalgfun{\alpha^\dagger B}} \coalg{GFB}$, by simple composition.

Now, define a canonical logic for $GF$ by 
\[
	\varepsilon = \epsilon F : F \To FGF, \qquad \mbox{or equivalently,} \qquad \varepsilon^{\flat} = G\epsilon:GFG \To G.
\] 
(Note that $\varepsilon^\flat$ is {\em always} a $GF$-action on $G$.)
Given a logic $\rho$ for $B$ this gives rise to forgetful logics $\rho \circledcirc \varepsilon$ and $\varepsilon \circledcirc \rho$,
for $GFB$-coalgebras and $BGF$-coalgebras respectively. Note that these logics do not depend on $\alpha$ anymore.
\begin{lemma}\label{lem:preprocessing}
For any $\rho \colon LF \Rightarrow FB$ and $\alpha \colon F \Rightarrow FT$, 
the natural transformation $\alpha^\dagger B$ is a morphism of logics (with $\id$ as the translation of syntax) 
from $\varepsilon \circledcirc \rho$ to $\alpha \circledcirc \rho$, and $B \alpha^\dagger$ is a morphism 
from $\rho \circledcirc \varepsilon$ to $\rho \circledcirc \alpha$.
\end{lemma}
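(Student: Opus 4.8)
The plan is to unfold the definition of a morphism of logics in the case $\tau=\id$, under which the defining condition reduces to the single equation $F\kappa\circ\theta = \rho'$ (where $\theta$ is the source logic and $\rho'$ the target). Both claims will then follow from the two mutually inverse recovery equations relating $\alpha$, $\alpha^\flat$ and $\alpha^\dagger$ established in Section~\ref{sec:forgetful}, in particular from $\alpha = F\alpha^\dagger\circ\epsilon F$. Note that, since the target logics $\alpha\circledcirc\rho$ and $\rho\circledcirc\alpha$ have behaviour functors $TB$ and $BT$, the morphisms $\kappa=\alpha^\dagger B$ and $\kappa=B\alpha^\dagger$ must go from $TB$ to $GFB$ and from $BT$ to $BGF$ respectively, matching $\alpha^\dagger\colon T\To GF$.

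For the first claim I would expand the two forgetful logics directly: $\varepsilon\circledcirc\rho = \varepsilon B\circ\rho = \epsilon FB\circ\rho$ and $\alpha\circledcirc\rho = \alpha B\circ\rho$, both natural transformations $LF\To FGFB$ and $LF\To FTB$ respectively. With $\kappa=\alpha^\dagger B$ the morphism condition $F(\alpha^\dagger B)\circ(\varepsilon\circledcirc\rho) = \alpha\circledcirc\rho$ becomes $F(\alpha^\dagger B)\circ\epsilon FB\circ\rho = \alpha B\circ\rho$. Since both sides are precomposed with $\rho$, it suffices to prove $F\alpha^\dagger B\circ\epsilon FB = \alpha B$, which is exactly the recovery equation $F\alpha^\dagger\circ\epsilon F = \alpha$ whiskered on the right by $B$. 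This disposes of the first claim with no further work.

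The second claim is slightly more involved because the relevant whiskering of $\alpha^\dagger$ sits inside $\rho$ rather than outside it. I would again expand: $\rho\circledcirc\varepsilon = \rho GF\circ L\varepsilon = \rho GF\circ L\epsilon F$ and $\rho\circledcirc\alpha = \rho T\circ L\alpha$, so that with $\kappa=B\alpha^\dagger$ the required identity is $F(B\alpha^\dagger)\circ\rho GF\circ L\epsilon F = \rho T\circ L\alpha$. The key step is the naturality square of $\rho\colon LF\To FB$ instantiated at the component $\alpha^\dagger_X\colon TX\to GFX$, which gives $\rho T\circ LF\alpha^\dagger = FB\alpha^\dagger\circ\rho GF$. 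Combining this with the recovery equation in the whiskered form $L\alpha = LF\alpha^\dagger\circ L\epsilon F$ rewrites the right-hand side $\rho T\circ L\alpha$ as $FB\alpha^\dagger\circ\rho GF\circ L\epsilon F$, which is precisely the left-hand side.

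I expect no genuine obstacle: the entire content is bookkeeping of whiskerings together with one appeal to naturality of $\rho$ and one appeal to the recovery equation. As a sanity check, the same two identities can be verified on the mate side using $(\alpha\circledcirc\rho)^\flat = \alpha^\flat L\circ T\rho^\flat$, $(\rho\circledcirc\alpha)^\flat = \rho^\flat\circ B\alpha^\flat$ and the dual recovery equation $\alpha^\flat = G\epsilon\circ\alpha^\dagger G$; there the roles are swapped, so that the first claim uses naturality of $\alpha^\dagger$ while the second is immediate. The only point requiring care is keeping the direction of $\kappa$ and the placement of each whiskering straight.
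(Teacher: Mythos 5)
Your proof is correct; the difference from the paper's is one of organization rather than substance. The paper opens with ``it suffices to prove that $\alpha^\dagger$ is a morphism of logics from $\varepsilon$ to $\alpha$'', delegating all the whiskering with $B$ and composition with $\rho$ to the horizontal composability of logic morphisms (the bifunctoriality of $\circledcirc$ recalled in Section~\ref{sec:forgetful} from~\cite{ChenJ14}), and then verifies that single $B$- and $\rho$-free condition on the \emph{mate} side: unfolding the definition $\alpha^\dagger = \alpha^\flat F\circ T\iota$, it shows $G\epsilon\circ\alpha^\dagger G=\alpha^\flat$ by naturality of $\alpha^\flat$ and the counit-unit identity~\eqref{eq:counit-unit} --- in effect re-deriving one of the recovery equations inside the proof. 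You instead take the recovery equation $\alpha = F\alpha^\dagger\circ\epsilon F$ from Section~\ref{sec:forgetful} as given (legitimate, since it is stated there as part of the mutually inverse correspondence) and verify the two whiskered identities directly on the logic side; your appeal to naturality of $\rho$ at the components $\alpha^\dagger_X$ in the second claim is exactly the content that the paper hides inside the horizontal composition of $\alpha^\dagger$ with the identity morphism on $\rho$. The trade-off: the paper's route isolates a reusable core fact and never computes with $B$ or $\rho$ at all, while yours is self-contained --- it needs neither the composability lemma nor any computation with mates --- at the cost of slightly more bookkeeping. Your closing observation that on the mate side the second claim reduces to $\alpha^\flat = G\epsilon\circ\alpha^\dagger G$ whiskered by $B$ identifies precisely the fact that the paper's diagram chase establishes, which confirms that the two arguments rest on the same underlying identity.
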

\begin{proof} 
It suffices to prove that $\alpha^\dagger$ is a morphism of logics from 
$\varepsilon$ to $\alpha$. This, recalling the definition of $\alpha^\dagger$ from Section~\ref{sec:forgetful}, means that the outside of the following diagram should commute:
$$
\xymatrix@C=1.5cm{
	TG \ar@{=}[r]  \ar@{=>}[d]_{T\iota G}
		& TG \ar@{=>}[d]^{\alpha^\flat} \\
	TGFG \ar@{=>}[d]_{\alpha^\flat FG}  \ar@{=>}[ur]_{TG\epsilon}
		& G \\
	GFG \ar@{=>}[ur]_{G \epsilon} 
		& 
}
$$
which follows from naturality of $\alpha^\flat$ and from~\eqref{eq:counit-unit}.
\end{proof}
As a consequence of Lemma~\ref{lm:logic-morph} and the above, we have commuting diagrams
\begin{equation}\label{eq:gamma-triangles}
\begin{gathered}
\xymatrix@C=0.2cm{
	\coalg{TB} \ar[dr]_{\bar{F}} \ar[rr]^{\coalgfun{\alpha^\dagger B}} 
		& & \coalg{GFB} \ar[dl]^{\bar{F}_{\varepsilon}} \\
	& \alg{L} &
}
\qquad 
\xymatrix@C=0.2cm{
	\coalg{BT} \ar[dr]_{\bar{F}'} \ar[rr]^{\coalgfun{B \alpha^\dagger}} 
		& & \coalg{BGF} \ar[dl]^{\bar{F}'_{\varepsilon}} \\
	& \alg{L} &
}
\end{gathered}
\end{equation}
where $\bar{F}_{\varepsilon}$ is the functor defined by $\varepsilon\circledcirc\rho$ as in~\eqref{eq:liftF}, and
$\bar{F}'_{\varepsilon}$ the functor defined by $\rho \circledcirc \varepsilon$.
Hence, encoding $TB$-coalgebras as $GFB$-coalgebras does not change their logical semantics.
More precisely, for any $f \colon X\to TBX$, the map from $X$ to $G\Phi$ defined as the logical semantics of $\varepsilon\circledcirc\rho$ on 
$\alpha^\dagger_{BX}\circ f$, coincides with semantics of $\alpha\circledcirc\rho$ on $f$. A similar result holds for $BT$-coalgebras.

Thanks to the mate $\rho^{\flat}\colon BG\To GL$ being an isomorphism, the functor $GF$ has a distributive law over $B$, denoted $\kappa\colon GFB\To BGF$ and defined by:
\begin{equation}\label{eq:kappa}
\xymatrix@C=1.3cm{
	GFB\ar@{=>}[r]^{G\rho} & GLF\ar@{=>}[r]^-{(\rho^{\flat})^{-1}F} & BGF.
}
\end{equation}
\begin{remark}
The same construction is used in~\cite{Johnstone:Adj-lif} to prove that a functor lifting to the category
of Eilenberg-Moore algebras for a monad induces a distributive law. There, the right adjoint $G$ is the (covariant) forgetful
functor; having a lifting of $B$ then means having a functor $L$ with the equality $BG = GL$. 
\end{remark}
Using $\kappa$ we can apply the determinization construction from~\cite{JSS14} as described in Section~\ref{sec:determinization}, putting $K=B$,
and taking the monad on $GF$ arising from the adjunction.
\begin{lemma}\label{lem:2ndstep}
The determinization procedure $(GF)^{\kappa}$ defined as in~\eqref{eq:det-tb} is correct with respect to $\varepsilon\circledcirc\rho$.
The determinization procedure $(GF)_{\kappa}$ is correct with respect to $\rho \circledcirc \varepsilon$.
\end{lemma}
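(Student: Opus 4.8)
The plan is to read this off from Corollary~\ref{cor:correctness}, which already packages Theorem~\ref{thm:det-translate} together with expressiveness. Concretely, I would instantiate that corollary with the monad $T := GF$ arising from the contravariant adjunction (unit $\iota$, multiplication $G\epsilon F$), the forgetful logic $\alpha := \varepsilon = \epsilon F$, the given logic $\rho$ for $B$, the functor $K := B$, the second logic $\theta := \rho$, and $\kappa$ the distributive law of~\eqref{eq:kappa}. Under these substitutions the procedure $T^\kappa$ of~\eqref{eq:det-tb} is exactly $(GF)^{\kappa}$, the forgetful logic $\alpha\circledcirc\rho$ is $\varepsilon\circledcirc\rho$, and symmetrically $T_\kappa$ is $(GF)_{\kappa}$ with $\rho\circledcirc\alpha = \rho\circledcirc\varepsilon$; so both assertions follow at once once the hypotheses of the corollary are checked.

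Three hypotheses remain. First, $\alpha^\flat = \varepsilon^\flat = G\epsilon$ must be a $GF$-action in the sense of Definition~\ref{def:action}, which is exactly what the surrounding text records as always holding. Second, $\theta = \rho$ must be expressive: since we work in the setting of Section~\ref{sec:isomates} where $\rho^\flat$ is a natural isomorphism, the logical semantics of $\rho$ on any $B$-coalgebra is a morphism into the final $B$-coalgebra $\hat{G}(\Phi,a)$, hence decomposes as a coalgebra homomorphism followed by the identity (which is a mono); thus $\rho$ is expressive. Third --- the only genuine computation --- $\kappa$ must be a morphism of logics from $\theta\circledcirc\alpha = \rho\circledcirc\varepsilon$ to $\alpha\circledcirc\rho = \varepsilon\circledcirc\rho$, i.e.\ the square displayed in Theorem~\ref{thm:det-translate} must commute.

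To check that square I would substitute $\kappa = (\rho^\flat)^{-1}F\circ G\rho$ and simplify its bottom-then-right leg $\rho^\flat\circ BG\epsilon\circ \kappa G$. Naturality of the isomorphism $(\rho^\flat)^{-1}$ along the components of $\epsilon$ rewrites $BG\epsilon\circ(\rho^\flat)^{-1}FG$ as $(\rho^\flat)^{-1}\circ GL\epsilon$, after which $\rho^\flat\circ(\rho^\flat)^{-1}$ cancels, leaving $GL\epsilon\circ G\rho G$. The top-then-right leg is $G\epsilon L\circ GF\rho^\flat$. Because $G$ is contravariant, both legs are $G$ applied to a single natural transformation $L\To FBG$ --- namely $\rho G\circ L\epsilon$ and $F\rho^\flat\circ\epsilon L$ respectively --- and these two transformations coincide by the second diagram of Lemma~\ref{lem:useful}. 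Hence the square commutes.

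I expect the only delicate point to be keeping the contravariance of $F$ and $G$ straight while collapsing the morphism-of-logics square onto Lemma~\ref{lem:useful}; beyond that there is nothing hard, since the action property and expressiveness are essentially handed to us, and the commuting square reduces to an already-proved identity. Applying Corollary~\ref{cor:correctness} then yields correctness of $(GF)^{\kappa}$ with respect to $\varepsilon\circledcirc\rho$ and of $(GF)_{\kappa}$ with respect to $\rho\circledcirc\varepsilon$ simultaneously.
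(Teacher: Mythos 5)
Your proof is correct and takes essentially the same route as the paper: it too instantiates Corollary~\ref{cor:correctness} with $T=GF$, $\alpha=\varepsilon$, $K=B$, $\theta=\rho$, notes that $G\epsilon$ is always an action and that $\rho$ is expressive because its semantics is a final-coalgebra morphism, and verifies the morphism-of-logics square for $\kappa$ using naturality of $(\rho^\flat)^{-1}$ plus the second diagram of Lemma~\ref{lem:useful}. The only difference is organizational: the paper splits the square into two stacked sub-squares, while you cancel $\rho^\flat\circ(\rho^\flat)^{-1}$ first and then recognize both legs as $G$-images of the two sides of Lemma~\ref{lem:useful}.
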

\begin{proof}
We use Corollary~\ref{cor:correctness}, where we put $T=GF$, $\alpha=\varepsilon$ defined above, $K=B$, and $\theta=\rho$. Obviously then $\theta$ is expressive, and it is easy to check that $\varepsilon^\flat=G\epsilon$ is an action. The only remaining condition is that $\kappa$ is a morphism
of logics as in Theorem~\ref{thm:det-translate}, which is the outer shape of:
\[\xymatrix{
	GFBG\ar@{=>}[r]^-{GF\rho^\flat}\ar@{=>}[d]_{G\rho G} & GFGL\ar@{=>}[d]^{G\epsilon L} \\
	GLFG\ar@{=>}[d]_{(\rho^\flat)^{-1}FG}\ar@{=>}[r]^-{GL\epsilon} & GL\ar@{=>}[d]_{(\rho^{\flat})^{-1}}\ar@{=}[rd]^{\id} \\
	BGFG\ar@{=>}[r]_{BG\epsilon} & BG\ar@{=>}[r]_{\rho^\flat} & GL.
}\]
Here, the top square commutes by Lemma~\ref{lem:useful} and the bottom square by naturality of $(\rho^\flat)^{-1}$.
\end{proof}
Altogether, the following two-step determinization procedures arise:
$$
\xymatrix@C=0.65cm{
\coalg{TB} \ar[d] \ar[r]^-{\coalgfun{\alpha^\dagger B}} & \coalg{GFB} \ar[d] \ar[r]^-{(GF)^\kappa}  & \coalg{B} \ar[d]\\
\C \ar[r]^{\Id} & \C \ar[r]^{GF} & \C 
}
\xymatrix@C=0.65cm{
\coalg{BT} \ar[d] \ar[r]^-{\coalgfun{B \alpha^\dagger}} & \coalg{BGF} \ar[d] \ar[r]^-{(GF)_\kappa}  & \coalg{B} \ar[d]\\
\C \ar[r]^{\Id} & \C \ar[r]^{GF} & \C 
}
$$
and they are correct with respect to $\alpha\circledcirc\rho$ and $\rho \circledcirc \alpha$ respectively. 

Correctness can also be proved without Corollary~\ref{cor:correctness}, using Theorem~\ref{thm:detcorr}, since the procedures 
coincide with the constructions from~\eqref{eq:hatGtildeF} and~\eqref{eq:hatGtildeFprime}:
\begin{theorem}\label{lem:2dets}
The following diagrams commute:
$$
\xymatrix{
	\coalg{TB} \ar[r]^-{\coalgfun{\alpha^\dagger B}} \ar[d]_-{\tilde{F}} & \coalg{GFB} \ar[d]^-{(GF)^\kappa} \\
	\alg{L} \ar[r]_-{\hat{G}} & \coalg{B}
}
\qquad
\xymatrix{
	\coalg{BT} \ar[r]^-{\coalgfun{B\alpha^\dagger}} \ar[d]_-{\tilde{F}'} & \coalg{BGF} \ar[d]^-{(GF)_\kappa} \\
	\alg{L} \ar[r]_-{\hat{G}} & \coalg{B}.
}
$$
with $\tilde{F}$ and $\tilde{F}'$ and $\hat{G}$ defined as in~\eqref{eq:lifting-situation}.
\end{theorem}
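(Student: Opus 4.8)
The plan is to evaluate both legs of each square on an arbitrary coalgebra and compare the resulting composites arrow by arrow. First I would recall that the bottom-right corners are already named: by the construction in Section~\ref{sec:can-det}, $\hat{G}\tilde{F}(X,f)$ is literally the $B$-coalgebra \eqref{eq:hatGtildeF} and $\hat{G}\tilde{F}'(X,f)$ is \eqref{eq:hatGtildeFprime}. So for the left square it suffices to unfold the determinization $(GF)^{\kappa}(\alpha^{\dagger}_{BX}\circ f)$ via \eqref{eq:det-tb}, substituting the multiplication $\mu=G\epsilon F$ of the monad $GF$ and the distributive law $\kappa=(\rho^{\flat})^{-1}F\circ G\rho$ from \eqref{eq:kappa}, and to check that the result equals \eqref{eq:hatGtildeF}; dually for the right square via \eqref{eq:det-bt} and \eqref{eq:hatGtildeFprime}.

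For the $TB$-square this comparison is short. Both composites begin with $GFf$ and end with $G\rho_X$ followed by $(\rho^{\flat})^{-1}_{FX}$, so all that must be matched is a single middle segment $GFTBX\to GFBX$: in \eqref{eq:hatGtildeF} it is $G\alpha_{BX}$, whereas the determinization produces $G\epsilon_{FBX}\circ GF\alpha^{\dagger}_{BX}$. These coincide by applying the contravariant functor $G$ to the identity $\alpha=F\alpha^{\dagger}\circ \epsilon F$ from Section~\ref{sec:forgetful}, instantiated at $BX$, since $G\epsilon_{FBX}\circ GF\alpha^{\dagger}_{BX}=G(F\alpha^{\dagger}_{BX}\circ\epsilon_{FBX})=G\alpha_{BX}$.

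The $BT$-square is where I expect the main obstacle, because in \eqref{eq:det-bt} the multiplication enters as $B\mu=BG\epsilon F$ \emph{after} the distributive law, rather than being absorbed into the $L$-algebra built by $\hat{G}\tilde{F}'$. Here I would proceed in three moves: use naturality of $G\rho$ at $\alpha^{\dagger}_X$ to slide $GFB\alpha^{\dagger}_X$ past $G\rho_{GFX}$, turning it into $GLF\alpha^{\dagger}_X$ applied after $G\rho_{TX}$; then apply $L$ and the contravariant $G$ to $\alpha_X=F\alpha^{\dagger}_X\circ\epsilon_{FX}$ to split $GL\alpha_X$ as $GL\epsilon_{FX}\circ GLF\alpha^{\dagger}_X$. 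After these two rewrites the two legs share all arrows but the last two, and the identity that remains is exactly
\[
	(\rho^{\flat})^{-1}_{FX}\circ GL\epsilon_{FX} = BG\epsilon_{FX}\circ (\rho^{\flat})^{-1}_{FGFX},
\]
which is the naturality square of the inverse mate $(\rho^{\flat})^{-1}\colon GL\To BG$ at the morphism $\epsilon_{FX}$. This is the one step that genuinely uses the standing assumption of this section, namely that $\rho^{\flat}$ is an isomorphism, so that $(\rho^{\flat})^{-1}$ exists and is natural.

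Finally, I would note a more economical organization that avoids repeating the $TB$-computation: by the triangles in \eqref{eq:gamma-triangles} one already has $\tilde{F}=\bar{F}_{\varepsilon}\circ(\coalgfun{\alpha^{\dagger}B})$ and $\tilde{F}'=\bar{F}'_{\varepsilon}\circ(\coalgfun{B\alpha^{\dagger}})$, so the theorem reduces to the special case $T=GF$, $\alpha=\varepsilon$, i.e.\ to $\hat{G}\bar{F}_{\varepsilon}=(GF)^{\kappa}$ and $\hat{G}\bar{F}'_{\varepsilon}=(GF)_{\kappa}$. The first holds on the nose once $\mu=G\epsilon F$ and $\kappa$ are unfolded, and the second again reduces to the single naturality square displayed above. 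Either way, the only nontrivial point is the placement of the multiplication $G\epsilon F$ in the $BT$-case, which the naturality of $(\rho^{\flat})^{-1}$ resolves.
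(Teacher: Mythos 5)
Your proposal is correct and matches the paper's proof in all essential respects: both unfold the definitions \eqref{eq:det-tb}, \eqref{eq:det-bt}, \eqref{eq:hatGtildeF} and \eqref{eq:hatGtildeFprime}, both dispose of the $TB$-square by an on-the-nose comparison (your use of $\alpha = F\alpha^\dagger\circ\epsilon F$ is exactly the content of Lemma~\ref{lem:preprocessing}, which the paper invokes via the triangles \eqref{eq:gamma-triangles}), and both reduce the $BT$-square to the single naturality square of $(\rho^\flat)^{-1}$ at $\epsilon_{FX}$. The ``more economical organization'' you sketch at the end --- factoring through \eqref{eq:gamma-triangles} and checking $\hat{G}\bar{F}_{\varepsilon}=(GF)^\kappa$ and $\hat{G}\bar{F}'_{\varepsilon}=(GF)_\kappa$ --- is precisely how the paper structures its proof.
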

\begin{proof}
By Lemma~\ref{lem:preprocessing}, $\alpha^\dagger B$ and $B\alpha^\dagger$ are morphisms of logics, which means that the diagrams
in~\eqref{eq:gamma-triangles} commute. Hence, we only need to check commutativity of:
$$
\xymatrix{
	 & \coalg{GFB} \ar[d]^-{(GF)^\kappa} \ar[dl]_{\bar{F}_{\varepsilon}}\\
	\alg{L} \ar[r]_-{\hat{G}} & \coalg{B}
}
\qquad
\xymatrix{
	& \coalg{BGF} \ar[d]^-{(GF)_\kappa}  \ar[dl]_{\bar{F}'_{\varepsilon}}\\
	\alg{L} \ar[r]_-{\hat{G}} & \coalg{B}.
}
$$
For the left triangle, notice that $\bar{F}_{\varepsilon}$ maps any $f \colon X \rightarrow GFBX$ to 
$$
\xymatrix@C=1.2cm{
 LFX \ar[r]^-{\rho_X} 
 & FBX \ar[r]^-{\epsilon_{FBX}} 
 & FGFBX \ar[r]^-{Ff}
 & FX
}
$$
and applying $\hat{G}$ yields
$$
\xymatrix@C=1.2cm{
 GFX \ar[r]^-{GFf}
 & GFGFBX \ar[r]^-{G\epsilon_{FBX}} 
 & GFBX \ar[r]^-{G\rho_{X}} 
 & GLFX \ar[r]^-{(\rho^\flat)^{-1}_{FX}}
 & BGFX
}
$$
which coincides with $(GF)^\kappa(X,f)$. Note that $G\epsilon F$ is the multiplication of the monad $GF$.

For the right triangle, we compute $\tilde{F}'_{\varepsilon}(g \colon X \rightarrow BGFX)$:
$$
\xymatrix@C=1.2cm{
 LFX \ar[r]^-{L\epsilon_{FX}} 
 & LFGFX \ar[r]^-{\rho_{GFX}} 
 & FBGFX \ar[r]^-{Fg}
 & FX
}
$$
and apply $\hat{G}$, yielding:
\begin{equation}\label{eq:hatg-f}
\xymatrix@C=1.2cm{
 GFX \ar[r]^-{GFg}
 & GFBGFX \ar[r]^-{G\rho_{GFX}} 
 & GLFGFX \ar[r]^-{GL\epsilon_{FX}}
 & GLFX \ar[r]^-{(\rho^\flat)^{-1}_{FX}}
 & BGFX .
}
\end{equation}
Further, $(GF)_\kappa(g \colon X \rightarrow BGFX)$ is:
$$
\xymatrix@C=1.2cm{
	GFX \ar[r]^-{GFg}
	& GFBGFX \ar[r]^-{G\rho_{GFX}}
	& GLFGFX \ar[r]^-{(\rho^\flat)^{-1}_{FGFX}}
	& BGFGFX \ar[r]^-{BG\epsilon_{FX}}
	& BGFX
}
$$
which coincides with~\eqref{eq:hatg-f} by naturality of $(\rho^\flat)^{-1}$.
\end{proof}

\subsection{A connection to Brzozowski's algorithm}\label{sec:brzozowski}

Call a $B$-coalgebra \emph{observable} if the morphism into a final coalgebra (assuming it exists) is mono~\cite{BBHPRS14}. 
For $\D=\Set$, the above canonical determinization procedure can be adapted to construct, for any $TB$-coalgebra,
an observable $B$-coalgebra whose final semantics coincides with the logical semantics on the original one.

Indeed, recall that $\Set$ has an (epi,mono)-factorization system and $L$ (as every endofunctor on $\Set$) preserves epimorphisms. From this it follows that every $L$-algebra homomorphism decomposes as a surjective homomorphism followed by an injective one. Given a coalgebra $f \colon X \rightarrow TBX$, apply this to
decompose the algebra homomorphism $s \colon (\Phi,a) \rightarrow \tilde{F}(X,f)$ as $s = m \circ e$, 
where $m$ and $e$ are injective and surjective respectively; call the $L$-algebra in the middle $(R, r)$. 
Recall that $Gs$ is a coalgebra homomorphism into the final coalgebra. In the present situation it decomposes as follows:
$$
\xymatrix{
	\hat{G}\tilde{F}(X,f) \ar[r]_{Gm} \ar@(ur,ul)[rr]^{Gs}
	& \hat{G}(R,r) \ar[r]_{Ge} 
	& \hat{G}(\Phi, a)
}
$$
and recall that $\hat{G}(\Phi, a)$ is a final coalgebra. Because $G$ is a right adjoint, it maps epis to monos,
therefore $Ge$ is injective and $\hat{G}(R,r)$ is observable. Moreover, thanks to Theorem~\ref{thm:detcorr} we have $\tr = Gs \circ \iota_X = Ge \circ Gm \circ \iota_X$, hence the final semantics $Ge$ of $\hat{G}(R,r)$ coincides with the logical semantics on $(X,f)$ along the mapping $Gm\circ\iota_X$.

%
Note that the construction of $\hat{G}(R,r)$ from $(X,f)$ is not a determinization procedure itself according to Definition~\ref{def:det}, as it does not lift any functor on $\C$.

The above refers to $TB$-coalgebras, but as everything else in this section, analogous reasoning works also for $BT$-coalgebras.
For $T = \Id$ and $B = 2 \times -^A$ on $\C=\Set$, that (almost) corresponds to Brzozowski's algorithm for
minimization of deterministic automata~\cite{brzozowski}. Applying $\tilde{F}$ to the given automaton corresponds to reversing transitions
and turning final states into initial ones. 
Epi-mono factorization corresponds to taking the \emph{reachable} part of this automaton. Then, applying $\hat{G}$ reverses
transitions again, and turns initial states into final ones. Our abstract approach stops here; the original algorithm concludes
by taking the reachable part again, which ensures minimality. 


For a more detailed coalgebraic presentation of the full Brzozowski minimization algorithm in several concrete examples, see~\cite{BBHPRS14}. 
Another approach, based on duality theory, is presented in~\cite{BezhanishviliKP12}. The main idea there is similar, in the sense that contravariant adjunctions are lifted
to adjunctions between categories of coalgebras and algebras. However, it differs from the above development in that the adjunctions used in~\cite{BezhanishviliKP12} are assumed
to be dual equivalences, and the lifting of the duality is proved concretely for each example, rather than that a general condition is given. 
Another coalgebraic approach to minimization, based on partition refinement, is
in~\cite{AdamekBHKMS12}. It is mentioned in the conclusion that part of Brzozowski's algorithm appears
as an instance of the abstract construction introduced there, but the precise connection remains to be understood.

Notice that we only assume the mate of $\rho$ to be iso;
there are no requirements on $\alpha$. The mate of $\rho$ is iso for the logic 
from Example~\ref{ex:nda-bt}.
Thus, we can instantiate $\alpha$ to obtain observable deterministic automata from non-deterministic automata or even alternating automata
(by taking $T = \powf \powf$ and, for $\alpha$, the composition of $\alpha$ and $\beta$ from Example~\ref{ex:alt-aut}).
The logic $\theta$ from Example~\ref{ex:wa-correct} is covered as well, so one can treat Moore automata and weighted automata. 
However, the abstract construction of an observable automaton does not necessarily yield a concrete algorithm, 
as discussed for the case of weighted automata in~\cite{BBHPRS14}.



\begin{thebibliography}{10}

\bibitem{AdamekBHKMS12}
J.~Ad{\'{a}}mek, F.~Bonchi, M.~H{\"{u}}lsbusch, B.~K{\"{o}}nig, S.~Milius, and
  A.~Silva.
\newblock A coalgebraic perspective on minimization and determinization.
\newblock In {\em Procs. {FoSSaCS} 2012}, volume 7213 of {\em LNCS}, pages
  58--73, 2012.

\bibitem{BezhanishviliKP12}
N.~Bezhanishvili, C.~Kupke, and P.~Panangaden.
\newblock Minimization via duality.
\newblock In {\em Procs. WoLLIC 2012}, volume 7456 of {\em LNCS}, pages
  191--205, 2012.

\bibitem{BBHPRS14}
F.~Bonchi, M.~M. Bonsangue, H.~H. Hansen, P.~Panangaden, J.~J. M.~M. Rutten,
  and A.~Silva.
\newblock Algebra-coalgebra duality in {Brzozowski's} minimization algorithm.
\newblock {\em {ACM} Trans. Comput. Log.}, 15(1):3, 2014.

\bibitem{brzozowski}
J.~Brzozowski.
\newblock Canonical regular expressions and minimal state graphs for definite
  events.
\newblock {\em Mathematical Theory of Automata}, 12:529--561, 1962.

\bibitem{ChenJ14}
L{-}T. Chen and A.~Jung.
\newblock On a categorical framework for coalgebraic modal logic.
\newblock {\em Electr. Notes Theor. Comput. Sci.}, 308:109--128, 2014.

\bibitem{Cirstea13}
C.~C{\^{\i}}rstea.
\newblock From branching to linear time, coalgebraically.
\newblock In {\em Procs. {FICS} 2013}, volume 126 of {\em {EPTCS}}, pages
  11--27, 2013.

\bibitem{Cirstea14}
C.~C{\^{\i}}rstea.
\newblock A coalgebraic approach to linear-time logics.
\newblock In {\em Procs. FoSSaCS 2014}, volume 8412 of {\em LNCS}, pages
  426--440, 2014.

\bibitem{corina15}
C.~Cirstea.
\newblock {Canonical Coalgebraic Linear Time Logics}.
\newblock In {\em Procs. CALCO 2015}, volume~35 of {\em Leibniz International
  Proceedings in Informatics (LIPIcs)}, pages 66--85, 2015.

\bibitem{cirsteaPattinson04}
C.~C{\^{\i}}rstea and D.~Pattinson.
\newblock Modular construction of modal logics.
\newblock In {\em Procs. CONCUR 2004}, volume 3170 of {\em LNCS}, pages
  258--275, 2004.

\bibitem{dubuc}
E.~J. Dubuc.
\newblock {\em Kan extensions in enriched category theory}.
\newblock Lecture notes in mathematics. Springer, 1970.

\bibitem{Goncharov13}
S.~Goncharov.
\newblock Trace semantics via generic observations.
\newblock In {\em Procs. {CALCO} 2013}, volume 8089 of {\em LNCS}, pages
  158--174, 2013.

\bibitem{HJS07}
I.~Hasuo, B.~Jacobs, and A.~Sokolova.
\newblock Generic trace semantics via coinduction.
\newblock {\em Log. Meth. Comp. Sci.}, 3(4), 2007.

\bibitem{Hermida97structuralinduction}
C.~Hermida and B.~Jacobs.
\newblock Structural induction and coinduction in a fibrational setting.
\newblock {\em Inf. and Comp.}, 145:107--152, 1997.

\bibitem{Jacobs:coalg}
B.~Jacobs.
\newblock Introduction to coalgebra. {T}owards mathematics of states and
  observations, 2014.
\newblock Draft.

\bibitem{pcm}
B.~Jacobs and J.~Mandemaker.
\newblock Coreflections in algebraic quantum logic.
\newblock {\em Foundations of Physics}, 42(7):932--958, 2012.

\bibitem{JSS14}
B.~Jacobs, A.~Silva, and A.~Sokolova.
\newblock Trace semantics via determinization.
\newblock {\em J. Comput. Syst. Sci.}, 81(5):859--879, 2015.

\bibitem{jacobssokolova09}
B.~Jacobs and A.~Sokolova.
\newblock Exemplaric expressivity of modal logics.
\newblock {\em J. Log. and Comput.}, 20(5):1041--1068, 2010.

\bibitem{Johnstone:Adj-lif}
P.~Johnstone.
\newblock Adjoint lifting theorems for categories of algebras.
\newblock {\em Bulletin of the London Mathematical Society}, 7:294--297, 1975.

\bibitem{kissigKurz}
C.~Kissig and A.~Kurz.
\newblock Generic trace logics.
\newblock {\em CoRR}, abs/1103.3239, 2011.

\bibitem{klin07}
B.~Klin.
\newblock Coalgebraic modal logic beyond sets.
\newblock {\em ENTCS}, 173:177--201, 2007.

\bibitem{KlinR15}
B.~Klin and J.~Rot.
\newblock Coalgebraic trace semantics via forgetful logics.
\newblock In {\em FoSSaCS 2015. Proceedings}, pages 151--166, 2015.

\bibitem{kupkepattinson11}
C.~Kupke and D.~Pattinson.
\newblock Coalgebraic semantics of modal logics: An overview.
\newblock {\em Theor. Comput. Sci.}, 412(38):5070--5094, 2011.

\bibitem{KMPS15}
A.~Kurz, S.~Milius, D.~Pattinson, and L.~Schr{\"{o}}der.
\newblock Simplified coalgebraic trace equivalence.
\newblock In {\em Software, Services, and Systems - Essays Dedicated to Martin
  Wirsing on the Occasion of His Retirement from the Chair of Programming and
  Software Engineering}, pages 75--90, 2015.

\bibitem{mac1998categories}
S.~Mac Lane.
\newblock {\em Categories for the working mathematician}, volume~5.
\newblock Springer, 1998.

\bibitem{lenisapowerwatanabe}
M.~Lenisa, J.~Power, and H.~Watanabe.
\newblock Category theory for operational semantics.
\newblock {\em Theor. Comput. Sci.}, 327(1-2):135--154, 2004.

\bibitem{manes}
E.~Manes.
\newblock Monads of sets.
\newblock {\em Handbook of algebra}, 3:67--153, 2003.

\bibitem{manes-mulry}
E.~Manes and P.~Mulry.
\newblock Monad compositions {I}: general constructions and recursive
  distributive laws.
\newblock {\em Theory and Applications of Categories}, 18(7):172--208, 2007.

\bibitem{lutz15}
S.~Milius, D.~Pattinson, and L.~Schr{\"o}der.
\newblock {Generic Trace Semantics and Graded Monads}.
\newblock In {\em Procs. CALCO 2015}, volume~35 of {\em Leibniz International
  Proceedings in Informatics (LIPIcs)}, pages 253--269, 2015.

\bibitem{pavlovic}
D.~Pavlovic, M.~Mislove, and J.~Worrell.
\newblock Testing semantics: Connecting processes and process logics.
\newblock In {\em {AMAST} 2006}, volume 4019 of {\em LNCS}, pages 308--322,
  2006.

\bibitem{powerturi}
J.~Power and D.~Turi.
\newblock A coalgebraic foundation for linear time semantics.
\newblock {\em ENTCS}, 29:259--274, 1999.

\bibitem{Rutten00}
J.~J. M.~M. Rutten.
\newblock Universal coalgebra: a theory of systems.
\newblock {\em Theor. Comput. Sci.}, 249(1):3--80, 2000.

\bibitem{SP11}
L.~Schr{\"{o}}der and D.~Pattinson.
\newblock Modular algorithms for heterogeneous modal logics via multi-sorted
  coalgebra.
\newblock {\em Math. Struct. in Comp. Sci.}, 21(2):235--266, 2011.

\bibitem{SBBR13}
A.~Silva, F.~Bonchi, M.~M. Bonsangue, and J.~J. M.~M. Rutten.
\newblock Generalizing determinization from automata to coalgebras.
\newblock {\em Log. Meth. Comp. Sci.}, 9(1), 2013.

\bibitem{vGSS95}
R.J. van Glabbeek, S.A. Smolka, and B.~Steffen.
\newblock Reactive, generative, and stratified models of probabilistic
  processes.
\newblock {\em Information and Computation}, 121(1):59 -- 80, 1995.

\end{thebibliography}

\begin{appendix}





\section{Details of Section~\ref{sec:examples}}\label{sec:details-examples}

In this section we expand on some of the examples considered in Section~\ref{sec:examples}. 

\paragraph{Example~\ref{ex:nda}.} 
We spell out the details of the semantics of the logic $\alpha \circledcirc \rho$ on an automaton $f \colon X \rightarrow \powf B X$:
\begin{align*}
	\tr(x)(\varepsilon) = \true
	  & \iff \alpha^\flat_{LA^*} (\powf\rho^\flat_{A^*} ((\powf B \tr)(f(x))))(*) = \true \\
	  & \iff \exists \varphi \in \powf\rho^\flat_{A^*} ((\powf B \tr)(f(x))) .\ \varphi(*) = \true \\ 
		& \iff \exists t \in f(x) .\ \rho^\flat_{A^*} (B \tr (t))(*) = \true \\
		& \iff * \in f(x) 
\end{align*}
where $\varepsilon$ is the empty word, 
and for all $a \in A$ and $w \in A^*$:
\begin{align*}
	\tr(x)(aw) = \true 
		& \iff \exists t \in f(x) .\ \rho^\flat_{A^*} (B \tr (t))(a,w) = \true \\
		& \iff \exists t \in f(x) .\ B \tr (t) = (a,\varphi) \wedge \varphi(w) = \true \\
		& \iff \exists y \in X .\ (a,y) \in f(x) \wedge \tr(y)(w) = \true
\end{align*}  

\paragraph{Example~\ref{ex:alt-aut}.} 
Spelling out $\tr$ yields $\tr(x)(\varepsilon)=o(x)$, and 
\begin{align*}
	\tr(x)(aw) = \true 
		& \iff \alpha_{A^*}^\flat (\powf \beta_{A^*}^\flat ((\powf \powf \tr)(f(x)(a))))(w)=\true \\
		& \iff \exists \varphi \in \powf \beta_{A^*}^\flat ((\powf \powf \tr)(f(x)(a))).\ \varphi(w)=\true \\
		& \iff \exists U \in(\powf \powf \tr) (f(x)(a)).\ \forall \varphi \in U.\ \varphi(w)=\true   \\
		& \iff \exists S \in f(x)(a).\ \forall y \in f(x)(a).\ \tr(y)(w) = \true 
\end{align*}

\paragraph{Example~\ref{ex:wta}.}
In order to understand the semantics $\tr$ of the forgetful logic on a tree automaton, we first compute the composite logic $\alpha^\flat_\Sigma \circ \M \rho^\flat \colon \M \Sigma \sem^- \Rightarrow S^\Sigma$:
\begin{align*}
	(\alpha^\flat_{\Sigma \Phi} \circ \M \rho^\flat_\Phi (\varphi))(\sigma(w_1, \ldots, w_n))
	& = \sum_{\psi \in \sem^{\Sigma \Phi}} (\M\rho^\flat_\Phi(\varphi))(\psi) \cdot \psi(\sigma(w_1, \ldots, w_n)) \\
	& = \sum_{\psi \in \sem^{\Sigma \Phi}} \sum_{\gamma \in {\rho^\flat_\Phi}^{-1}(\psi)}\varphi(\gamma) \cdot \psi(\sigma(w_1, \ldots, w_n)) \\
	& = \sum_{\varphi_1, \ldots, \varphi_n \in \sem^\Phi} \varphi(\sigma(\varphi_1, \ldots, \varphi_n)) \cdot \prod_{i = 1..n} \varphi_i(w_i)
\end{align*}
The next step is to instantiate this to $\Sigma^*\emptyset$ and precompose with $\M \Sigma \tr$:
\begin{align*}
	&(\alpha^\flat_{\Sigma \Sigma^*\emptyset} \circ \M \rho^\flat \circ \M \Sigma \tr(\psi))(\sigma(t_1, \ldots, t_n)) \\
		& = \sum_{\varphi_1, \ldots, \varphi_n \in \sem^{\Sigma^*\emptyset}} (\M \Sigma \tr (\psi))(\sigma(\varphi_1, \ldots, \varphi_n)) \cdot 
			\prod_{i=1..n} \varphi_i(t_i) \\
		& = \sum_{\varphi_1, \ldots, \varphi_n \in \sem^{\Sigma^*\emptyset}} \sum_{\substack{x_1 \in \tr^{-1}(\varphi_1) \\ \ldots \\ x_n \in \tr^{-1}(\varphi_n)}} 
		\psi(\sigma(x_1, \ldots, x_n))
		\cdot 
			\prod_{i=1..n} \varphi_i(t_i) \\
		& = \sum_{x_1, \ldots, x_n \in X} \psi(\sigma(x_1, \ldots, x_n)) \cdot \prod_{i = 1..n} \tr(x_i)(t_i)
\end{align*}
It follows that the diagram~\eqref{eq:wta-trace} commutes if and only if for all $\sigma(t_1, \ldots t_n)$
and all $x \in X$:
$$
\tr(x)(\sigma(t_1, \ldots, t_n)) = \sum_{x_1, \ldots, x_n \in X} f(x)(\sigma(x_1, \ldots, x_n)) \cdot \prod_{i = 1..n} \tr(x_i)(t_i)
$$
which is the semantics presented in Example~\ref{ex:wta}.

The map $\tr$ is computed by induction; this is done by turning a weighted tree automaton $f$
into the $\Sigma$-algebra $\hat{F}(f)$, which can be viewed as a \emph{deterministic weighted bottom-up tree automaton}. We 
spell out the details.
Given a coalgebra	$f \colon X \rightarrow \M \Sigma X$ the computed $\Sigma$-algebra looks as follows:
$$
\xymatrix{
	\Sigma (\sem^X) \ar[r]^-{\rho_X} 
		&\sem^{\Sigma X} \ar[r]^-{\alpha_{\Sigma X}} 
		&\sem^{\M \Sigma X} \ar[r]^-{\sem^{f}} 
		&\sem^X
}
$$
We have 
$$
	\alpha_X(\varphi)(\psi) = \sum_{x \in X} \varphi(x) \cdot \psi(x)
$$
and $\rho = \rho^\flat$:
\begin{align*}
	& (\sem^{\Sigma \eta_X} \circ \sem^{\rho^\flat_{\sem^X}} \circ \eta_{\Sigma \sem^X} (\sigma(\varphi_1, \ldots, \varphi_n)))(\tau(x_1, \ldots, x_m)) \\
		& = (\sem^{\rho^\flat_{\sem^X} \circ \Sigma \eta_X} \circ \lambda \psi. \psi(\sigma(\varphi_1, \ldots, \varphi_n)))(\tau(x_1, \ldots, x_m)) \\
		& = (\rho^\flat_{\sem^X} \circ \Sigma \eta_X(\tau(x_1, \ldots, x_m)))(\sigma(\varphi_1, \ldots, \varphi_n)) \\
		& = \rho^\flat_{\sem^X}(\tau(\lambda \varphi. \varphi(x_1), \ldots, \lambda \varphi.\varphi(x_m)))(\sigma(\varphi_1, \ldots, \varphi_n)) \\
		& = \begin{cases} \prod_{i = 1..n} \varphi_i(x_i) & \text{if } \sigma = \tau \\ 0 & \text{otherwise} \end{cases}
\end{align*}
The algebra is as follows:
\begin{align*}
	\hat{F}(X,f)(\sigma(\varphi_1, \ldots, \varphi_n))(x) 
	  & = (\sem^f \circ \alpha_{\Sigma X} \circ \rho_X(\sigma(\varphi_1, \ldots, \varphi_n)))(x) \\
		& = (\alpha_{\Sigma X} \circ \rho_X(\sigma(\varphi_1, \ldots, \varphi_n)))(f(x)) \\
		& = \sum_{t \in \Sigma X} f(x)(t) \cdot \rho_X(\sigma(\varphi_1, \ldots, \varphi_n))(t) \\
		& = \sum_{x_1, \ldots, x_n \in X} f(x)(\sigma(x_1, \ldots, x_n)) \cdot \prod_{i = 1..n} \varphi_i(x_i)
\end{align*}

\section{Proof of Lemma~\ref{lem:monadmap}}\label{app:sec5}
Recall the definitions:
\[
	\alpha^\dagger = \alpha^\flat F\circ T\iota, \qquad
	\alpha^\flat = G\epsilon\circ\alpha^\dagger G.
\]
First let us assume that $\alpha^\flat \colon TG\To G$ is a monad action, and check the axioms of monad morphisms for $\alpha^\dagger$. The unit axiom is the outer shape of:
\[\xymatrix{
	{\Id}\ar@{=>}[r]^-\iota\ar@{=>}[d]_\eta & GF\ar@{=}[rd]\ar@{=>}[d]_{\eta GF} \\
	T\ar@{=>}[r]_-{T\iota} & TGF\ar@{=>}[r]_{\alpha^\flat F} & GF
}\]
where the square commutes by naturality of $\eta$, and the triangle by a monad action axiom. The multiplication axiom is the outer shape of:
\[\xymatrix{
	TT\ar@{=>}[r]^-{TT\iota}\ar@{=>}[dd]_\mu & TTGF\ar@{=>}[r]^{T\alpha^\flat F}\ar@{=>}[dd]_{\mu GF} & TGF\ar@{=>}[r]^-{T\iota GF}\ar@{=}[rd] & TGFGF\ar@{=>}[r]^-{\alpha^\flat FGF}\ar@{=>}[d]^{TG\epsilon F} & GFGF\ar@{=>}[dd]^{G\epsilon F} \\
	& & & TGF\ar@{=>}[rd]^{\alpha^\flat F} \\
	T\ar@{=>}[r]_-{T\iota} & TGF\ar@{=>}[rrr]_{\alpha^\flat F} & & & GF
}\]
where everything commutes, from left to right: by naturality of $\mu$, by a monad action axiom, by~\eqref{eq:counit-unit} and by naturality of $\alpha^\flat$.

Now assume that $\alpha^\dagger \colon T\To GF$ is a monad morphism, and check the axioms of monad actions for $\alpha^\flat$. The unit axiom is the outer shape of:
\[\xymatrix{
& G\ar@{=>}[ld]_{\eta G}\ar@{=>}[d]^{\iota G}\ar@{=}[rd] \\
TG\ar@{=>}[r]_-{\alpha^\dagger G} & GFG\ar@{=>}[r]_-{G\epsilon} & G
}\]
where the left triangle is the unit axiom of a monad morphism, and the right triangle is~\eqref{eq:counit-unit}. The multiplication axiom is the outer shape of:
\[\xymatrix{
TTG\ar@{=>}[r]^-{T\alpha^\dagger G}\ar@{=>}[dd]_{\mu G} & TGFG\ar@{=>}[r]^{TG\epsilon}\ar@{=>}[d]_{\alpha^\dagger GFG} & TG\ar@{=>}[d]^{\alpha^\dagger G} \\
& GFGFG\ar@{=>}[r]_-{GFG\epsilon}\ar@{=>}[d]_{G\epsilon FG} & GFG\ar@{=>}[d]^{G\epsilon} \\
TG\ar@{=>}[r]_-{\alpha^\dagger G} & GFG\ar@{=>}[r]_{G\epsilon} & G
}\]
where the leftmost shape is the multiplication axiom of a monad morphism, and the two smaller squares commute by naturality of $\alpha^\dagger$ and $\epsilon$.

\section{Details of examples in Section~\ref{sec:determinization}}\label{app:sec6}

\paragraph{Example~\ref{ex:wa-correct}.}
The condition from Theorem~\ref{thm:det-translate} is commutativity of the following diagram:
$$
\xymatrix{
	\M(A \times \sem^- + 1) \ar@{=>}[r]^-{\M \rho^\flat} \ar@{=>}[d]^{\kappa_{\sem^-}} 
		& \M (\sem^{A \times - + 1}) \ar@{=>}[r]^-{\alpha^\flat_{A \times - + 1}} 
		& \sem^{A \times - + 1} \ar@{=}[d] \\
	\sem \times (\M \sem^-)^A \ar@{=>}[r]^-{\id \times (\alpha^\flat)^A} 
		& \sem \times (\sem^-)^A \ar@{=>}[r]^-{\theta^\flat} 
		& \sem^{A \times - + 1}
}
$$
Indeed, we have
$$
 K\alpha^\flat_\Phi \circ \kappa_{\sem^\Phi}(\varphi) = 
		(\varphi(*), \lambda a. \alpha^\flat(\lambda \psi. \varphi(a,\psi))) 
		= (\varphi(*), \lambda a. \lambda w. \sum_{\psi \in \sem^\Phi} \varphi(a, \psi) \cdot \psi(w))
$$
and thus
\begin{align*}
(\theta^\flat_\Phi \circ K\alpha^\flat_\Phi \circ \kappa_{\sem^\Phi}(\varphi))(*) &= \varphi(*)\\
	(\theta^\flat_\Phi \circ K\alpha^\flat_\Phi \circ \kappa_{\sem^\Phi}(\varphi))(a,w) 
	& = \sum_{\psi \in \sem^\Phi} \varphi(a, \psi) \cdot \psi(w)
\end{align*}
which coincides with $\alpha^\flat_{A \times \Phi +1} \circ \M \delta^\flat_\Phi$ as computed (in a more general setting) in Appendix~\ref{sec:details-examples}.

\paragraph{Example~\ref{ex:nda-correct}.}
We treat the determinization $\langle \tau^o, \tau^t \rangle$ described in the example. The relevant condition of Theorem~\ref{thm:det-translate}
instantiates to commutativity of:
$$
\xymatrix{
	\powf(2 \times (2^-)^A) \ar@{=>}[r]^-{\powf \rho^\flat} \ar@{=>}[d]^-{\langle \tau^o_{2^-}, \tau^t_{2^-} \rangle} 
		& \powf(2^{A \times - + 1}) \ar@{=>}[r]^-{\beta^\flat L} 
		& 2^{A \times - + 1} \ar@{=}[d] \\
	2 \times (\powf(2^-))^A \ar@{=>}[r]^-{\id \times (\beta^\flat)^A} 
		& 2 \times (2^-)^A \ar@{=>}[r]^-{\rho^\flat}
		& 2^{A \times - + 1}
}
$$
We have, for any set $\Phi$:
\begin{align*}
	(\beta^\flat_{L\Phi} \circ \powf \rho^\flat_{\Phi})(S)(*) = \true 
		& \iff \forall \varphi \in (\powf\rho^\flat_\Phi)(S) . \varphi(*)=\true \\
		& \iff \forall (o,t) \in S . o = \true \\
		& \iff \tau^o_{2^\Phi}(S) = \true \\
		& \iff (\rho^\flat_\Phi \circ B\beta^\flat_\Phi \circ \langle \tau^o_{2^\Phi}, \tau^t_{2^\Phi} \rangle)(S)(*) = \true
\end{align*}
and for any $a \in A$, $w \in \Phi$:
\begin{align*}
  (\beta^\flat_{L\Phi} \circ \powf \rho^\flat_{\Phi})(S)(a,w) = \true 
  	& \iff \forall \varphi \in (\powf \rho^\flat_\Phi)(S). \varphi(a,w) = \true \\
  	& \iff \forall (o, t) \in S. t(a)(w) = \true \\
  	& \iff \forall \varphi \in \tau_{2^\Phi}^t(S)(a) . \varphi(w) = \true \\
  	& \iff \beta^\flat_\Phi (\tau^t_{2^\Phi}(S)(a))(w) = \true \\
  	& \iff ((\beta^\flat_\Phi)^A \circ \tau^t_{2^\Phi})(S)(a)(w) = \true \\
  	& \iff (\rho^\flat_\Phi \circ B\beta^\flat_\Phi \circ \langle \tau^o_{2^\Phi}, \tau^t_{2^\Phi} \rangle)(S)(a,w) = \true
\end{align*}
which proves commutativity of the diagram.

\paragraph{Example~\ref{ex:alt-correct}.}
First, we check that $\chi$ defined by~\eqref{eq:goodChi} is a natural transformation. To this end, for any function $f \colon X\to Y$, and a family $S\in \powf\powf X$, we need to check that
\[
	\chi_Y((\powf\powf f)(S)) = (\powf\powf f)(\chi_X(S)).
\]
For any $W\subseteq Y$, calculate:
\begin{align}
	 W\in\chi_Y((\powf\powf f)(S)) &\iff W\subseteq\bigcup(\powf\powf f)(S) \land \forall T\in(\powf\powf f)(S).\ W\cap T\neq\emptyset  \nonumber \\
	&\iff W\subseteq\powf f\left(\bigcup S\right)\land \forall U\in S.\ W\cap (\powf f)(U) \neq \emptyset \label{eq:B1}\\
	W\in(\powf\powf f)(\chi_X(S)) &\iff \exists V\in\chi_X(S).\ W=(\powf f)(V)  \nonumber \\
	&\iff \exists V\subseteq \bigcup S.\ (W=(\powf f)(V) \land \forall U\in S.\ V\cap U\neq\emptyset)\label{eq:B2}
\end{align}
For the second equivalence in~\eqref{eq:B1}, notice that
\[
	\bigcup(\powf\powf f)(S) = \powf f\left(\bigcup S\right)
\]
for any $f \colon X\to Y$ and $S\subseteq\powf X$, since direct images preserve unions.

We need to show that~\eqref{eq:B1} and~\eqref{eq:B2} are equivalent for every $W$. 

For the implication from~\eqref{eq:B1} to ~\eqref{eq:B2}, put 
\[
   V=\overleftarrow{f}(W) \cap \bigcup S.
\]
Then $W=(\powf f)(V)$. Indeed, for the left-to-right containment, pick any $y\in W$. Then, since $W\subseteq\powf f\left(\bigcup S\right)$, there is some $x\in \bigcup S$ such that $f(x)=y$. Obviously then $x\in\overleftarrow{f}(W)$, hence $x\in V$ and $y\in\powf f (V)$. The right-to-left containment is equivalent to $\overleftarrow{f}(W)\supseteq V$, which follows directly from the definition of $V$.

Moreover, for any $U\in S$, by~\eqref{eq:B1} there exists some $y\in W\cap (\powf f)(U)$, so there is some $x\in U$ such that $f(x)=y$. Then $x\in \overleftarrow{f}(W)$. Obviously $x\in\bigcup S$ as well, so $x\in V$ and $V\cap U\neq\emptyset$.

For the implication from~\eqref{eq:B2} to ~\eqref{eq:B1}, take a $V\subseteq \bigcup S$ that exists by~\eqref{eq:B2}, and calculate:
\[
	W = (\powf f)(V) \subseteq (\powf f)\left(\bigcup S\right).
\]
Furthermore, for any $U\in S$, by~\eqref{eq:B2} we have $V\cap U\neq\emptyset$. Then calculate:
\[
	W\cap(\powf f)(U) = (\powf f)(V)\cap (\powf f)(U) \supseteq (\pow f)(V\cap U)\neq\emptyset.
\]
We thus conclude that $\chi \colon \powf\powf\To\powf\powf$ is a natural transformation.

\bigskip

We now prove that the diagram~\eqref{eq:chialphabeta} commutes. Recall that, for any set $X$ and any $S\in\powf\powf 2^X$, $U\in\powf 2^X$ and $x\in X$:
\[\begin{array}{rcl}
\chi_{2^X}(S) &=& \{V\subseteq \bigcup S \mid \forall U\in S.\ V\cap U\neq\emptyset\} \\[1ex]
\alpha^\flat_{X}(U)(x)=\true &\iff& \exists\phi\in S.\ \phi(x)=\true \\[1ex]
\beta^\flat_{X}(U)(x)=\true &\iff& \forall\phi\in S.\ \phi(x)=\true
\end{array}\]
Then calculate, for any $S\in\powf\powf 2^X$ and $x\in X$:
\begin{align}
\beta^\flat_X(\powf\alpha^\flat_X(S))(x)=\true
	&\iff \forall\phi\in\powf\alpha^\flat_X(S).\ \phi(x)=\true \nonumber \\
	&\iff \forall U\in S.\ \alpha^\flat_X(U)(x)=\true \nonumber \\
	&\iff \forall U\in S.\ \exists\phi\in U. \phi(x)=\true \label{eq:B3}\\
\alpha^\flat_X(\powf\beta^\flat_X(\chi_{2^X}(S)))(x) = \true	
	&\iff \exists\phi\in\powf\beta^\flat_X(\chi_{2^X}(S)).\ \phi(x)=\true \nonumber \\
	&\iff \exists V\in \chi_{2^X}(S).\ \beta^\flat_X(V)(x)=\true \nonumber \\
	&\iff \exists V\in \chi_{2^X}(S).\ \forall\phi\in V. \phi(x)=\true \nonumber \\
	&\iff \exists V\subseteq \bigcup S.(\forall U\in S.\ U\cap V\neq\emptyset)\land (\forall\phi\in V. \phi(x)=\true)\label{eq:B4}
\end{align}
Conditions~\eqref{eq:B3} and~\eqref{eq:B4} are easily equivalent, therefore~\eqref{eq:chialphabeta} commutes.

\bigskip

Finally, the condition of Theorem~\ref{thm:det-translate} in this case is that the following commutes:
	$$
		\xymatrix{
			\powf B \powf G \ar@{=>}[r]^-{\powf B\alpha^\flat} \ar@{=>}[d]^-{\tau {\powf G}} 
				& \powf BG \ar@{=>}[r]^-{\powf \rho^\flat} \ar@{=>}[d]^-{\tau G}
				& \powf GL \ar@{=>}[r]^-{\beta^\flat L}
				& GL \\
			B\powf\powf G \ar@{=>}[r]^-{B\powf\alpha^\flat} \ar@{=>}[d]^-{B\chi G}
				& B\powf G \ar@{=>}[r]^-{B\beta^\flat} 
				& BG \ar@{=>}[ur]^-{\rho^\flat} & \\
			B\powf\powf G \ar@{=>}[r]^-{B\powf \beta^\flat} 
				& B\powf G \ar@{=>}[ru]^-{B\alpha^\flat}
				& &			
		}
	$$
	The square commutes by naturality of $\alpha^\flat$, and the upper right shape is the diagram for proving correctness of 
	the determinization procedure $\tau$ considered in Example~\ref{ex:nda-correct}.
	The lower shape is~\eqref{eq:chialphabeta} mapped by $B$.

\end{appendix}

\end{document}